%
\documentclass[letterpaper,UKenglish]{lipics-v2016}
 
\usepackage{microtype}
\usepackage{algpseudocode,algorithm} 

\newcommand{\OPT}{\mbox{\sc OPT}}

\def\tn{\textnormal}

\def\mcr{\mathsf{R}}
\def\mbr{\mathbb{R}}
\def\mbfy{\mathbf{y}}


\bibliographystyle{plain}

\title{Approximation algorithms for the vertex happiness
\footnote{This work was partially supported by NSERC Canada and NSFC 61672323.}}
\titlerunning{Approximating the vertex happiness} 

\author{Yao Xu}
\author[2]{Peng Zhang}
\author{Randy Goebel}
\author{Guohui Lin}
\affil{Department of Computing Science, University of Alberta.
  Edmonton, Alberta T6G 2E8, Canada.
  \texttt{\{xu2, rgoebel, guohui\}@ualberta.ca}}
\affil[2]{School of Computer Science and Technology, Shandong University.
  Jinan, Shandong 250101, China.
  \texttt{algzhang@sdu.edu.cn}}
\authorrunning{Xu {\it et al.} version/\today} 

\Copyright{Yao Xu, Peng Zhang, Randy Goebel and Guohui Lin} 

\subjclass{Dummy classification -- please refer to \url{http://www.acm.org/about/class/ccs98-html}}
\keywords{Vertex happiness, multi-labeling; submodular set function; approximation algorithm; integrality gap}


\begin{document}
\maketitle

\begin{abstract}
We investigate the maximum happy vertices (MHV) problem and its complement, the minimum unhappy vertices (MUHV) problem.
We first show that the MHV and MUHV problems are a special case of the supermodular and submodular multi-labeling (Sup-ML and Sub-ML) problems, respectively,
by re-writing the objective functions as set functions.
The convex relaxation on the Lov\'{a}sz extension, originally presented for the submodular multi-partitioning (Sub-MP) problem,
can be extended for the Sub-ML problem,
thereby proving that the Sub-ML (Sup-ML, respectively) can be approximated within a factor of $2 - \frac{2}{k}$ ($\frac{2}{k}$, respectively).
These general results imply that the MHV and the MUHV problems can also be approximated within $\frac{2}{k}$ and $2 - \frac{2}{k}$, respectively,
using the same approximation algorithms.
For MHV, this $\frac{2}{k}$-approximation algorithm improves the previous best approximation ratio $\max \{\frac{1}{k}, \frac{1}{\Delta + 1}\}$, 
where $\Delta$ is the maximum vertex degree of the input graph.
We also show that an existing LP relaxation is the same as the concave relaxation on the Lov\'{a}sz extension for the Sup-ML problem;
we then prove an upper bound of $\frac{2}{k}$ on the integrality gap of the LP relaxation.
These suggest that the $\frac{2}{k}$-approximation algorithm is the best possible based on the LP relaxation.
For MUHV, we formulate a novel LP relaxation and prove that it is the same as the convex relaxation on the Lov\'{a}sz extension for the Sub-ML problem;
we then show a lower bound of $2 - \frac{2}{k}$ on the integrality gap of the LP relaxation.
Similarly, these suggest that the $(2 - \frac{2}{k})$-approximation algorithm is the best possible based on the LP relaxation.
Lastly, we prove that this $(2 - \frac{2}{k})$-approximation is optimal for the MUHV problem, assuming the Unique Games Conjecture.

\keywords{Vertex happiness, multi-labeling; submodular set function; approximation algorithm; integrality gap}
\end{abstract}

\section{Introduction}
\label{sec:intro}
In a recently studied vertex-coloring problem by Zhang and Li~\cite{ZL15}, 
one is given an undirected graph $G = (V, E)$ with a non-negative weight $w(v)$ for each vertex $v \in V$, a color set $C = \{1, 2, \ldots, k\}$,
and a partial vertex coloring function $c: V \mapsto C$,
and the goal is to color all the uncolored vertices such that the total weight of {\em happy} vertices is maximized.
A vertex is {\em happy} if it shares the same color with all its neighbors in the coloring scheme.
The problem is referred to as the {\em maximum happy vertices} (MHV)~\cite{ZL15},
which was inspired by the study on the {\em homophyly} governing the structures of large scale networks such as social networks and citation networks.

The complement of the MHV problem is the {\em minimum unhappy vertices} (MUHV),
which can be defined analogously to minimize the total weight of unhappy vertices,
where a vertex is {\em unhappy} if its color is different from at least one of its neighbors.

We remark that these two vertex-coloring problems are in fact labeling problems, and we use ``color'' and ``label'' interchangeably in the sequel;
they are different from the classic {\em graph coloring} problem~\cite{Har69},
in which a feasible vertex coloring scheme must assign different colors to any adjacent vertices.
We also note that, if no vertex is pre-colored $i$, for any $i$, then this color $i$ can be removed without affecting the optimum;
we therefore assume without loss of generality that every color is used in the given partial vertex coloring function $c$.

Given the graph $G = (V, E)$ with the vertex set $V$ and the edge set $E$, for any subset $X \subseteq V$,
define the {\em boundary} of $X$, denoted as $\partial(X)$, to be the subset of vertices of $X$ each has at least one neighbor outside of $X$.
Let $\iota(X) = X - \partial(X)$, which is called the {\em interior} of $X$.
In a vertex coloring scheme, let $S_i$ denote the subset of all the vertices colored $i$;
then every vertex of $\partial(S_i)$ is unhappy while all vertices of $\iota(S_i)$ are happy.
We extend the vertex weight function to subsets of vertices, that is, $w(X) := \sum_{v \in X} w(v)$ for any $X \subseteq V$;
and we define the set function $f(\cdot)$ as
\begin{equation}
\label{eq1}
f(X) := w(\partial(X)), \ \forall X \subseteq V.
\end{equation}
Note that a vertex coloring scheme one-to-one corresponds to a partition ${\cal S} = \{S_1, S_2, \ldots, S_k\}$ of the vertex set $V$,
where each part $S_i$ contains all the vertices colored $i$.
This way, the MUHV problem can be cast as finding a partition ${\cal S}$ such that $f({\cal S}) := \sum_{i=1}^k f(S_i)$ is minimized.

It is important to note that the above defined set function $f(\cdot)$ depends on the given edge set $E$;
a change to $E$ could alter the function definition, and subsequently alters the optimization objective.
In particular, when there are multiple vertices in the graph pre-colored the same color,
we cannot simply contract all of them into a single vertex unless they have exactly the same neighbors in the original graph;
otherwise, this contracting process essentially changes the edge set $E$, causing a change to the defined set function $f(\cdot)$.
(A concrete example is provided in the Appendix C.)

It is not hard to validate (the proofs are provided in the Appendix A)
that the boundary $\partial(\cdot)$ of a vertex subset in the given graph $G = (V, E)$ has the following properties:
i) $\partial(\emptyset) = \emptyset$;
ii) $\partial(X \cap Y) \subseteq \partial(X) \cup \partial(Y)$;
iii) $\partial(X \cup Y) \subseteq \partial(X) \cup \partial(Y)$; and
iv) $\partial(X \cap Y) \cap \partial(X \cup Y) \subseteq \partial(X) \cap \partial(Y)$, for any two subsets $X, Y \subseteq V$.

Therefore, the set function $f: 2^V \to \mbr$ defined in Eq.~(\ref{eq1}) satisfies $f(X) + f(Y) \ge f(X \cap Y) + f(X \cup Y)$,
for any two subsets $X, Y \subseteq V$ (a detailed proof is provided in the Appendix B).
That is, $f(\cdot)$ is a {\em submodular}~\cite{L83} function on the set $V$.
This way, the MUHV problem can be cast as a special case of the following submodular multi-labeling (Sub-ML) problem:

Given a ground set $V$, a non-negative submodular set function $f: 2^V \to \mbr$ with $f(\emptyset) = 0$,
a set of labels $L = \{1, 2, \ldots, k\}$,
and a partial labeling function $\ell: V \mapsto L$ which pre-assigns each label $i$ to a non-empty subset $T_i \subset V$,
the goal of the {\em submodular multi-labeling} (Sub-ML) problem is to find a partition ${\cal S} = \{S_1, S_2, \ldots, S_k\}$ of the ground set $V$
to minimize $f({\cal S}) := \sum_{i = 1}^{k} f(S_i)$,
where the part $S_i$ is the subset of elements assigned with the label $i$.

We remark again that for each $i$, $|T_i| \ge 1$,
and any attempt to contracting $T_i$ into a single element could either destroy the submodularity of the set function $f(\cdot)$
or alter the function definition leading to a change in the optimization objective.

Conversely, given the graph $G = (V, E)$
we define another set function $g(\cdot)$ as
\begin{equation}
\label{eq2}
g(X) := w(\iota(X)), \ \forall X \subseteq V.
\end{equation}
Then $g(X) = w(X) - f(X)$ for any subset $X \subseteq V$, and consequently $g(\cdot)$ is a {\em supermodular}~\cite{L83} function on the set $V$.
Thus, the MHV problem can be cast as finding a partition ${\cal S} = \{S_1, S_2, \ldots, S_k\}$ of the vertex set $V$
such that $g({\cal S}) := \sum_{i=1}^k g(S_i)$ is maximized,
where each part $S_i$ contains all the vertices colored $i$;
it can also be cast as a special case of the {\em supermodular multi-labeling} (Sup-ML) problem that can be analogously defined.

\subsection{Related research}
Classification problems have been formulated as cuts, or partition, or labeling, or coloring, and have been widely studied for a very long time.

For the MHV problem, Zhang and Li~\cite{ZL15} proved that it is polynomial time solvable for $k = 2$ and it becomes NP-hard for $k \ge 3$;
for $k \ge 3$, they presented two approximation algorithms:
a greedy algorithm with an approximation ratio of $\frac{1}{k}$,
and an $\Omega(\frac{1}{\Delta^3})$-approximation based on a subset-growth technique, where $\Delta$ is the maximum vertex degree of the input graph.
Recently, Zhang {\it et al.}~\cite{ZJL15} presented an improved algorithm with an approximation ratio of $\frac{1}{\Delta + 1}$ based on 
a combination of randomized LP rounding techniques.
Together, these imply that the current best approximation ratio for the MHV problem is $\max \{\frac{1}{k}, \frac{1}{\Delta + 1}\}$.

For the complementary MUHV problem, to the best of our knowledge, it hasn't been particularly studied in the literature.

Recall that the MHV and the MUHV problems are a special case of the Sup-ML and the Sub-ML problems, respectively.
We again remind the readers that in an instance of these multi-labeling problems,
each label is pre-assigned to at least one element and to multiple elements in general.
Another special case of the Sub-ML problem is when each label is pre-assigned to exactly one element,
called the {\em submodular multiway partition} (Sub-MP) problem~\cite{ZNI05}, which has received a lot of studies.
In the Appendix C, we provide an instance to show that one cannot reduce the MUHV problem to the Sub-MP problem
by simply contracting all elements pre-assigned the same label into a single element.

The Sub-MP problem was first studied by Zhao {\it et al.} \cite{ZNI05}, who presented a $(k - 1)$-approximation algorithm.
Years later, Chekuri and Ene \cite{CE11smp} proposed a convex relaxation for Sub-MP by using the Lov\'{a}sz extension, leading to a 2-approximation.
This was further improved to a $(2 - \frac{2}{k})$-approximation shortly after by Ene {\it et al.}~\cite{EVW13}.
On the inapproximability, Ene {\it et al.}~\cite{EVW13} proved that any $(2 - \frac{2}{k} - \epsilon)$-approximation for Sub-MP
requires exponentially many value queries, for any $\epsilon > 0$, or otherwise it implies $NP = RP$.

Sub-MP includes many well studied cut problems including the classic (edge-weighted) multiway cut~\cite{DJPSY94},
the node-weighted multiway cut~\cite{GVY04} and the hypergraph multiway cut~\cite{OFN12} as special cases.
The multiway cut problem is NP-hard for $k \ge 3$ even if all edges have unit weight~\cite{DJPSY94},
with many approximation algorithms designed and analyzed~\cite{DJPSY94,CKR98,FK00,KKSTY04,BNS13,SV14}.
Most of these approximation results are based on the {\em linear program} (LP) relaxation presented by C{\u{a}}linescu {\it et al.}~\cite{CKR98},
and the current best approximation ratio is $1.2965$~\cite{SV14}.
The hypergraph multiway cut and the node-weighted multiway cut are proven more difficult to approximate,
that it is Unique Games-hard to achieve a $(2 - \frac{2}{k} - \epsilon)$-approximation for any $\epsilon > 0$~\cite{EVW13}.

One can similarly define the complement of the Sub-MP problem, called the {\em supermodular multiway partition} (Sup-MP) problem.
Sup-MP includes the multiway uncut problem~\cite{LRS06} as a special case,
where the $k$ terminals in the input graph can be considered as $k$ elements each being pre-assigned with a distinct label.
The multiway uncut problem seems only studied by Langberg {\it et al.}~\cite{LRS06}, who presented a $0.8535$-approximation based on an LP relaxation.
When generalizing the multiway uncut problem to pre-assign multiple terminals in a part of the vertex partition,
it becomes the recently studied {\em maximum happy edges} (MHE) problem~\cite{ZL15}.
It is important to note that MHE is not a special case of the Sup-MP problem, but a special case of the Sup-ML problem.
Zhang and Li~\cite{ZL15} presented a $\frac 12$-approximation for the MHE problem based on a simple division strategy;
extending the LP relaxation for the multiway uncut,
Zhang {\it et al.}~\cite{ZJL15} improved the approximation ratio to $\frac 12 + \frac{\sqrt{2}}{4} h(k) \ge 0.8535$,
where $h(k) \ge 1$ is a function in $k$.

More broadly, the multi-labeling problems can be deemed as special cases of the {\em cost allocation} (CA) problem~\cite{CE11sca},
in which $k$ different non-negative set functions are given for evaluating the $k$ parts of the partition separately;
they are also closely related to the {\em optimal allocation} (OA) problem~\cite{LLN01,LOS02,FV06,DS06,F09,KLMM08,SS12} in combinatorial auctions,
where no elements are necessarily pre-assigned a label but the set function (called {\em utility function}) is assumed monotone in general.

\subsection{Our contributions}
Our target problems are the MHV and the MUHV problems,
and we aim to design improved approximation algorithms for them and to prove the hardness results in approximability.

We first show that the convex relaxation on the Lov\'{a}sz extension for the Sub-MP problem~\cite{CE11smp} can be extended for the Sub-ML problem;
therefore the same approximation algorithm works for Sub-ML with a performance ratio $(2 - \frac{2}{k})$.
Analogously, we present the concave relaxation on the Lov\'{a}sz extension for the Sup-ML problem,
thus proving that Sup-ML can be approximated within a factor of $\frac{2}{k}$.
Therefore, the MUHV problem can be approximated within a factor of $(2 - \frac{2}{k})$ and the MHV problem can be approximated within a factor of $\frac 2k$ too;
the $\frac 2k$-approximation algorithm for the MHV problem improves the previous best ratio of $\max\{\frac 1k, \frac 1{\Delta + 1}\}$~\cite{ZL15,ZJL15}.

Next, for the MHV problem, we show that the LP relaxation presented in \cite{ZJL15}, called LP-MHV,
is equivalent to the concave relaxation for the Sup-ML problem based on the Lov\'{a}sz extension to the set function $g(\cdot)$ defined in Eq.~(\ref{eq2});
for the MUHV problem, we propose a novel LP relaxation, called LP-MUHV,
and we show that it is equivalent to the convex relaxation for Sub-ML based on the Lov\'{a}sz extension to the set function $f(\cdot)$ defined in Eq.~(\ref{eq1}).
We then prove an upper bound of $\frac{2}{k}$ on the integrality gap of LP-MHV;
and conclude that the $\frac{2}{k}$-approximation is the best possible based on LP-MHV.
We also prove a lower bound of $2 - \frac{2}{k}$ on the integrality gap of LP-MUHV;
and conclude that the $(2 - \frac{2}{k})$-approximation is the best possible based on LP-MUHV.
Lastly, we prove that it is Unique Games-hard to achieve a $(2 - \frac{2}{k} - \epsilon)$-approximation for MUHV, for any $\epsilon > 0$.
We remark that the last hardness result gives another evidence that
it is Unique Games-hard to achieve a $(2 - \frac{2}{k} - \epsilon)$-approximation for the general Sub-ML problem, for any $\epsilon > 0$.

\subsection{Organization}
The remainder of the paper is organized as follows.
In the next section, we introduce some basic notions such as the Lov\'{a}sz extension to a set function;
we then present the relaxation based on the Lov\'{a}sz extension for the Sub-ML problem
and a similar relaxation for the Sup-ML problem.
We also present a simple approximation algorithm using the randomized rounding technique in \cite{EVW13},
and conclude that it is a $(2 - \frac{2}{k})$-approximation for the Sub-ML problem and it is a $\frac 2k$-approximation for the Sup-ML problem.
In Section 3, we study the MHV problem, by firstly introducing the LP relaxation formulated in \cite{ZJL15},
then showing its equivalence to the relaxation based on the Lov\'{a}sz extension to the set function $f(\cdot)$ defined in Eq.~(\ref{eq1}),
and lastly proving an upper bound of $\frac 2k$ on the integrality gap.
In Section 4, we first present a novel LP relaxation for the MUHV problem,
then show its equivalence to the relaxation based on the Lov\'{a}sz extension to the set function $g(\cdot)$ defined in Eq.~(\ref{eq2}),
then similarly prove a lower bound of $(2 - \frac 2k)$ on the integrality gap,
and lastly prove a stronger inapproximability result that it is Unique Games-hard to achieve a $(2 - \frac 2k - \epsilon)$-approximation, for any $\epsilon > 0$.
We conclude the paper in Section 5.

\section{Preliminaries}
Given a ground set $V = \{v_1, v_2, \ldots, v_n\}$, $y_j := y(v_j)$ is a real variable that maps the element $v_j$ to the closed unit interval $[0, 1]$.
For any non-negative set function $f: 2^V \to \mbr$, its Lov\'{a}sz extension~\cite{L83,V13} is a function $\hat{f}: [0, 1]^V \to \mbr$ such that
\begin{equation}
\label{eq3}
\hat{f}({\mbfy}) := \sum_{j = 1}^{n-1} (y_{\pi_j} - y_{\pi_{j+1}}) f(\{v_{\pi_1}, v_{\pi_2}, \ldots, v_{\pi_j}\}),
\end{equation}
where $\mbfy = (y_1, y_2, \ldots, y_n) \in [0, 1]^V$ and
$\pi$ is a permutation on $\{1, 2, \ldots, n\}$ such that $1 = y_{\pi_1} \ge y_{\pi_2} \ge \ldots \ge y_{\pi_n} = 0$.

It has been proven by Lov{\'a}sz~\cite{L83} that the set function $f(\cdot)$ is submodular (supermodular, respectively)
if and only if its Lov\'{a}sz extension is convex (concave, respectively).

In the context of the Sub-ML problem with $f(\cdot)$ being the non-negative submodular set function
and $T_i \subset V$ being the non-empty subset of elements pre-labeled $i$, $i \in L = \{1, 2, \ldots, k\}$,
we define a binary variable $y_j^i := y^i(v_j)$ for each pair of an element $v_j$ and a label $i$,
such that $y_j^i = 1$ if and only if the element $v_j$ is labeled $i$.
Next, $y_j^i$ is relaxed to be a real variable in the closed unit interval $[0, 1]$.
For each $i$, let $\mbfy_i = (y_1^i, y_2^i, \ldots, y_n^i) \in [0, 1]^V$;
let $\hat{f}: [0, 1]^V \to \mbr$ be the Lov\'{a}sz extension of $f(\cdot)$ as defined in Eq.~(\ref{eq3}).

A relaxation based on the Lov\'{a}sz extension for the Sub-ML problem can be written as follows:
\begin{alignat}{4}
& \tn{minimize} \quad 	& \sum_{i = 1}^k \hat{f}(\mbfy_i) \tag{\tn{CP-Sub-ML}} \label{cp:subml}\\
& \tn{subject to} \quad & \sum_{i = 1}^k y_j^i 	& = 1, \quad 	& \forall v_j 	& \in V \label{cp:subml:1}\\
&						& y_j^i 				& = 1, \quad 	& \forall v_j	& \in T_i, ~i \in L \label{cp:subml:t}\\
&						& y_j^i 				& \ge 0, \quad 	& \forall v_j 	& \in V, ~i \in L
\end{alignat}
The submodularity of the function $f(\cdot)$ implies that (\ref{cp:subml}) is a {\em convex program} (CP) and thus can be solved in polynomial time.

In fact, such a relaxation based on the Lov\'{a}sz extension was proposed by Chekuri and Ene~\cite{CE11smp} for the Sub-MP problem,
which is a special case of the Sub-ML problem in that $|T_i| = 1$ for every label $i$.
We extend this relaxation for the Sub-ML problem with little change, except that in the Constraint~(\ref{cp:subml:t}) $y_j^i = 1$ holds for multiple elements $v_j$.
Nevertheless, we remark again that one cannot reduce the Sub-ML problem to Sub-MP by
cruelly contracting all the elements pre-labeled with the same label into a single element,
which suggests incorrectly that all these pre-labeled elements were identical.

The following approximation algorithm {\bf $\mcr\mcr$} first solves the convex program (\ref{cp:subml}),
followed by a randomized rounding scheme to obtain a feasible solution to the Sub-ML problem.
Ene {\it et al.} showed that {\bf $\mcr\mcr$} is a $(2 - \frac 2k)$-approximation algorithm for the Sub-MP problem~\cite{EVW13}.
The algorithm uses a uniformly random variable $\theta$ in the interval $(\frac 12, 1]$, and defines the following $k+3$ sets:
\begin{equation}
\label{eq:def}
\begin{array}{lcl}
S_i(\theta)	&= &\{v_j \mid y_j^i > \theta \}, \ \mbox{for each } i \in L,\\
S(\theta) 	&= &\bigcup_{i=1}^k S_i(\theta),\\
R(\theta) 	&= &V - S(\theta),\\
Q(\theta) 	&= &R(1 - \theta).
\end{array}
\end{equation}

\begin{algorithm}[htb]
\caption*{{\bf Algorithm $\mcr\mcr$}}
\begin{algorithmic}[1]
\State Solve (\ref{cp:subml}) to obtain an optimal fractional solution $\{y_j^i \mid v_j \in V, i \in L\}$. \label{alg:rr:lp}
\State Pick a parameter $\theta \in (\frac{1}{2}, 1]$ uniformly at random.
\State Assign all elements of $S_i(\theta)$ the label $i$, for each $i \in L$.
\State Pick a label $i'$ from $L$ uniformly at random, assign all elements of $R(\theta)$ the label $i'$.
\end{algorithmic}
\end{algorithm}

The performance analysis for the algorithm {\bf $\mcr\mcr$} on the Sub-MP problem in \cite{EVW13} does not need the fact that $|T_i| = 1$ for every label $i$.
Therefore, the same analysis proves the following theorem.

\begin{theorem}{\rm \cite{EVW13}}
\label{thm1}
Algorithm $\mcr\mcr$ is a $\left(2 - \frac{2}{k}\right)$-approximation for the Sub-ML problem.
\end{theorem}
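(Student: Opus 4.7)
The plan is to transcribe the analysis of Ene, Vondr\'ak, and Wu~\cite{EVW13} for the Sub-MP problem to our more general Sub-ML setting, and to verify that none of its steps secretly uses $|T_i| = 1$. The structure is: (i) show the rounded labeling is feasible; (ii) bound the expected cost per label; (iii) sum and compare with the CP optimum.

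\textbf{Feasibility.} First I would argue the rounding returns a valid labeling that respects the pre-labeling. Since $\theta \in (\tfrac{1}{2}, 1]$ with $\theta < 1$ almost surely, and constraint~(\ref{cp:subml:1}) gives $\sum_{i \in L} y_j^i = 1$, at most one label $i$ can satisfy $y_j^i > \theta$ for any fixed $v_j$; hence the sets $\{S_i(\theta)\}_{i\in L}$ defined in~(\ref{eq:def}) are pairwise disjoint and, together with $R(\theta)$, partition $V$. For each pre-labeled $v_j \in T_i$, constraint~(\ref{cp:subml:t}) forces $y_j^i = 1 > \theta$, so $v_j$ falls in $S_i(\theta)$ and receives its correct label $i$. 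Crucially this argument is independent of $|T_i|$, which is the only visible difference between Sub-ML and Sub-MP at the relaxation level.

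\textbf{Per-label expected cost.} Let $Z_i$ denote the part labeled $i$ in the rounded solution. Averaging first over the uniform choice of $i' \in L$,
\begin{equation*}
\tnE_{i'}[f(Z_i) \mid \theta] = \frac{k-1}{k}\, f(S_i(\theta)) + \frac{1}{k}\, f\bigl(S_i(\theta) \cup R(\theta)\bigr).
\end{equation*}
I would then invoke the EVW13 bound
\begin{equation*}
\tnE_{\theta}\!\left[\frac{k-1}{k}\, f(S_i(\theta)) + \frac{1}{k}\, f\bigl(S_i(\theta) \cup R(\theta)\bigr)\right] \le \left(2 - \frac{2}{k}\right) \hat{f}(\mbfy_i),
\end{equation*}
whose proof combines the Lov\'asz integral representation $\hat{f}(\mbfy_i) = \int_0^1 f(\{v_j : y_j^i > t\})\,\d t$, the non-negative submodularity of $f$, and the simplex constraint $\sum_i y_j^i = 1$ (used to identify $S_i(\theta) \cup R(\theta)$ with an appropriate super-level set related to $\max_{i' \neq i} y_j^{i'}$). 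Every ingredient is either a global hypothesis (submodularity, non-negativity) or a constraint of (\ref{cp:subml}) that holds verbatim in the Sub-ML relaxation, so no change to the derivation is needed.

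\textbf{Summation and conclusion.} Summing the per-label bound over $i \in L$ gives expected cost at most $\left(2 - \frac{2}{k}\right)\sum_{i=1}^k \hat{f}(\mbfy_i)$, which is $\left(2 - \frac{2}{k}\right)$ times the optimum of (\ref{cp:subml}). Since (\ref{cp:subml}) is a valid convex relaxation of Sub-ML, this optimum lower-bounds the true minimum, so the expected cost is within $\left(2 - \frac{2}{k}\right)$ of $\OPT$, proving the theorem.

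The main obstacle is really just conceptual: convincing oneself that the per-label bound in Step~2 survives the switch to multi-element pre-labeled sets $T_i$. One might worry that the rounding could place different elements of a single $T_i$ into different parts, breaking feasibility, or that the EVW13 arithmetic implicitly relied on $|T_i|=1$ somewhere. The feasibility argument rules out the former, and a line-by-line inspection of the EVW13 computation shows that it uses only submodularity of $f$, the shape of the rounding, and the simplex constraint, all of which are unaffected by the sizes of the $T_i$. Once this verification is done, the theorem follows.
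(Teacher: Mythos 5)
Your proposal is correct and takes essentially the same route as the paper: the paper's entire proof is the observation that the Ene--Vondr\'ak--Wu analysis of Algorithm $\mcr\mcr$ for Sub-MP never uses $|T_i|=1$ and therefore applies verbatim to Sub-ML, which is exactly the verification you carry out (in somewhat more detail, spelling out feasibility, the per-label expected-cost bound inherited from \cite{EVW13}, and the summation against the optimum of (\ref{cp:subml})).
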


Replacing the submodular function $f(\cdot)$ by a supermodular function $g(\cdot)$ and inverting the minimization to the maximization,
a relaxation based on the Lov\'{a}sz extension for Sup-ML can be written as follows:
\begin{alignat}{4}
& \tn{maximize} \quad 	& \sum_{i = 1}^k \hat{g}(\mbfy_i) \tag{\tn{CP-Sup-ML}} \label{cp:supml}\\
& \tn{subject to} \quad & \sum_{i = 1}^k y_j^i 	& = 1, \quad 	& \forall v_j 	& \in V \label{cp:supml:1}\\
&						& y_j^i 				& = 1, \quad 	& \forall v_j	& \in T_i, ~i \in L \label{cp:supml:t}\\
&						& y_j^i 				& \ge 0, \quad 	& \forall v_j 	& \in V, ~i \in L
\end{alignat}
where $\hat{g}: [0, 1]^V \to \mbr_+$ is the Lov\'{a}sz extension of $g(\cdot)$ as defined in Eq.~(\ref{eq3}).
(\ref{cp:supml}) is a {\em concave program} and thus can be solved in polynomial time.
Using an analogous argument as the proof of Theorem~\ref{thm1}, we can have the following corollary on the Sup-ML problem.

\begin{corollary}
\label{coro2}
Algorithm $\mcr\mcr$ is a $\frac 2k$-approximation for the Sup-ML problem.
\end{corollary}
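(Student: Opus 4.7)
The approach is to carry over the analysis of Algorithm $\mcr\mcr$ on the Sub-ML problem from \cite{EVW13}, swapping the submodular $f$ for the supermodular $g$, flipping every structural inequality ($\leq \to \geq$), and turning upper bounds into lower bounds to match the maximization.

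First I check two preliminaries. (i) (\ref{cp:supml}) is a genuine relaxation: every integral partition $\mathcal{S} = \{S_1,\ldots,S_k\}$ of $V$ corresponds to a feasible $\mbfy$ with $y_j^i = 1$ if $v_j \in S_i$ and $0$ otherwise, and for such $\{0,1\}$-valued vectors Eq.~(\ref{eq3}) collapses to $\hat{g}(\mbfy_i) = g(S_i)$, so the optimum $V^* = \sum_i \hat{g}(\mbfy_i^*)$ upper-bounds $\OPT$. (ii) The rounding is feasible: $y_j^i = 1$ for every $v_j \in T_i$ forces $v_j \in S_i(\theta)$ for every $\theta \in (\tfrac12, 1]$, preserving the pre-labeling; and since $\sum_i y_j^i = 1$, at most one coordinate can exceed $\tfrac12$, so $S_1(\theta),\ldots,S_k(\theta),R(\theta)$ partition $V$.

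Next, I decompose the expected value of the output partition $\mathcal{S}^{\tn{out}}$, conditional on $\theta$ and on the uniform random label $i' \in L$:
\[
\tnE_{i'}\!\left[\sum_{i=1}^k g(S_i^{\tn{out}})\right] = \frac{k-1}{k}\sum_i g(S_i(\theta)) + \frac{1}{k}\sum_i g(S_i(\theta)\cup R(\theta)).
\]
Supermodularity of $g$ together with $g(\emptyset)=0$ gives $g(S_i(\theta)\cup R(\theta)) \ge g(S_i(\theta)) + g(R(\theta))$, dual to the submodular inequality used in \cite{EVW13}. Integrating over $\theta\in(\tfrac12,1]$ and reassembling the level-set integrals via Eq.~(\ref{eq3}), the same computation of Ene~{\it et al.} yields the lower bound $\tnE[g(\mathcal{S}^{\tn{out}})] \ge \tfrac{2}{k}V^*$ in place of their upper bound $(2-\tfrac{2}{k})V^*$; since $V^* \ge \OPT$, the claim follows.

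The main obstacle is purely bookkeeping: each step of \cite{EVW13} uses only submodularity and the integral representation of the Lov\'asz extension, both of which dualize cleanly. The most delicate point is the handling of $g(S_i(\theta)\cup R(\theta))$: in \cite{EVW13} an enlarged level set is chosen to upper-bound the analogous submodular quantity, whereas here one must identify a contained level set to produce a lower bound on $g(S_i(\theta)\cup R(\theta))$. Once this step is verified, combining the remaining (purely integral) inequalities reproduces the constant $\tfrac{2}{k}$, emerging from the complementary weights $\tfrac{k-1}{k}$ and $\tfrac{1}{k}$ between the integral and random portions of the rounding.
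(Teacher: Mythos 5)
Your proposal takes exactly the paper's route: the paper offers no independent proof of Corollary~\ref{coro2}, saying only that ``an analogous argument as the proof of Theorem~\ref{thm1}'' (i.e.\ a dualization of the Ene~{\it et al.} analysis, with supermodularity replacing submodularity and all bounds flipped) gives the $\frac{2}{k}$ guarantee, which is precisely the dualization you describe, including the correct conditional decomposition over the random label $i'$. The one step you flag but leave unverified does go through: for $\theta > \frac{1}{2}$, any $v$ with $y^i_v > 1-\theta$ satisfies $y^j_v < \theta$ for all $j \ne i$, so the level set $\{v : y^i_v > 1-\theta\}$ is contained in $S_i(\theta)\cup R(\theta)$, whence superadditivity and nonnegativity give $g(S_i(\theta)\cup R(\theta)) \ge g(\{v : y^i_v > 1-\theta\})$, and the substitution $\theta \mapsto 1-\theta$ reassembles the full integral $\int_0^1 g(\{v: y^i_v > u\})\,\d u = \hat{g}(\mbfy_i)$ with the factor $\frac{1}{k}\cdot 2 = \frac{2}{k}$.
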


\section{The maximum happy vertices (MHV) problem}
\label{sec:mhv}
Recall that the MHV problem can be cast as finding a partition ${\cal S} = \{S_1, S_2, \ldots, S_k\}$ of the vertex set $V$
such that $g({\cal S}) = \sum_{i=1}^k g(S_i)$ is minimized,
where the set function $g(\cdot)$ is defined in Eq.~(\ref{eq2}) and $S_i$ is the subset of vertices colored $i$, for each $i$.

\begin{lemma}
\label{lemma3}
The set function $g(\cdot)$ defined in Eq.~(\ref{eq2}) is supermodular.
\end{lemma}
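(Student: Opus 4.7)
The plan is to avoid redoing any combinatorial work on the boundary operator $\partial(\cdot)$ and instead leverage the identity $g(X) = w(X) - f(X)$, together with the modularity of $w$ and the already-established submodularity of $f$.

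First, I would verify the pointwise identity $g(X) = w(X) - f(X)$ for every $X \subseteq V$. This follows immediately from the disjoint decomposition $X = \iota(X) \sqcup \partial(X)$, which (since $w$ is defined as a sum over vertices) gives $w(X) = w(\iota(X)) + w(\partial(X)) = g(X) + f(X)$. This identity is already noted in the paragraph introducing $g$, so it can be cited rather than re-proved.

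Next, I would observe that the extended vertex weight $w(\cdot)$ is a modular set function, i.e., $w(X) + w(Y) = w(X \cap Y) + w(X \cup Y)$ for all $X, Y \subseteq V$. This is immediate from $w(X) = \sum_{v \in X} w(v)$. Together with the submodularity of $f$, which is established earlier in the excerpt and proved in Appendix~B, we get
\begin{align*}
g(X) + g(Y) &= \bigl(w(X) + w(Y)\bigr) - \bigl(f(X) + f(Y)\bigr) \\
&\le \bigl(w(X \cap Y) + w(X \cup Y)\bigr) - \bigl(f(X \cap Y) + f(X \cup Y)\bigr) \\
&= g(X \cap Y) + g(X \cup Y),
\end{align*}
which is exactly the supermodularity inequality for $g$.

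There is no real obstacle here: the lemma is a short corollary of the submodularity of $f$ (already done) plus the fact that subtracting a modular function from a submodular one flips the inequality. The only thing to be careful about is citing the right earlier facts — namely the identity $g = w - f$ from the paragraph defining $g$, and the submodularity of $f$ from the paragraph preceding the Sub-ML definition — rather than re-deriving them from the four boundary properties in Appendix~A.
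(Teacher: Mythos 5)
Your proposal is correct and matches the paper's own argument: the paper likewise derives the supermodularity of $g$ from the identity $g(X) = w(X) - f(X)$ together with the modularity of $w$ and the submodularity of $f$ (the latter proved in Appendix~B via the boundary properties). Your write-up simply makes explicit the inequality chain that the paper leaves as a one-line remark.
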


\begin{theorem}
\label{thm4}
Algorithm $\mcr\mcr$ is a $\frac{2}{k}$-approximation for the MHV problem, which is a special case of the Sup-ML problem.
\end{theorem}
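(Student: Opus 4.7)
The plan is to verify that MHV satisfies every hypothesis required for the Sup-ML framework, so that Corollary~\ref{coro2} applies directly. The bulk of the work is already done: Lemma~\ref{lemma3} supplies the supermodularity of $g(\cdot)$, and Corollary~\ref{coro2} supplies the approximation guarantee for the abstract problem. What remains is a bookkeeping check that the MHV instance really fits the Sup-ML template.

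First, I would verify the structural prerequisites of the set function. Since $\iota(\emptyset) = \emptyset$, we have $g(\emptyset) = w(\emptyset) = 0$, and since every vertex weight $w(v)$ is non-negative, $g(X) = w(\iota(X)) \ge 0$ for all $X \subseteq V$. Combined with Lemma~\ref{lemma3}, the function $g:2^V\to\mbr_+$ is a non-negative supermodular set function with $g(\emptyset)=0$, exactly as Sup-ML demands.

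Second, I would identify the pre-labeling. The partial coloring $c:V\mapsto C$ in MHV induces, for each color $i\in L=\{1,\ldots,k\}$, the subset $T_i$ of vertices pre-colored $i$; by the assumption in the introduction that every color is actually used in $c$, each $T_i$ is non-empty, matching the requirement $\emptyset \neq T_i \subset V$ in the Sup-ML definition. A feasible MHV coloring corresponds one-to-one to a partition $\mathcal{S}=\{S_1,\ldots,S_k\}$ of $V$ with $T_i \subseteq S_i$, and for this partition the total weight of happy vertices equals $\sum_{i=1}^{k} w(\iota(S_i)) = \sum_{i=1}^{k} g(S_i) = g(\mathcal{S})$. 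Hence maximizing the weight of happy vertices is identical to the Sup-ML objective on the instance $(V, g, L, \{T_i\})$.

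Third, with MHV exhibited as a genuine instance of Sup-ML, Corollary~\ref{coro2} immediately yields a $\frac{2}{k}$-approximation via Algorithm $\mcr\mcr$. The only conceptual pitfall worth flagging in the write-up is the one already emphasized in the introduction: one must \emph{not} attempt to reduce MHV to Sub-MP/Sup-MP by contracting each $T_i$ to a single vertex, since that alters the edge set and hence the set function $g(\cdot)$ itself; the Sup-ML framework is precisely what accommodates $|T_i|\ge 1$ without any such contraction. No further analysis is needed.
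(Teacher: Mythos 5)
Your proposal is correct and matches the paper's (implicit) argument exactly: the paper derives Theorem~\ref{thm4} by combining Lemma~\ref{lemma3} (supermodularity of $g(\cdot)$) with Corollary~\ref{coro2}, after casting MHV as a Sup-ML instance in precisely the way you describe. Your additional bookkeeping checks ($g(\emptyset)=0$, non-negativity, non-emptiness of each $T_i$) are all consistent with the paper's setup and add nothing incorrect.
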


The following LP relaxation for the MHV problem (\ref{lp:mhv}), given a graph $G = (V, E)$, is formulated by Zhang {\it et al.}~\cite{ZJL15},
where $V = \{v_1, v_2, \ldots, v_n\}$,
$w_j = w(v_j)$ denotes the weight of the vertex $v_j$,
$C = \{1, 2, \ldots, k\}$ is the color set,
$c(v_j) = i$ if the vertex $v_j$ is pre-colored $i$,
a binary variable $y_j^i := y^i(v_j)$ denotes whether or not the vertex $v_j$ is colored $i$,
and $\mbfy_i = (y_1^i, y_2^i, \ldots, y_n^i)$.

\begin{alignat}{5}
& \tn{maximize} \quad & \sum_{j=1}^n{w_j z_j} \tag{\tn{LP-MHV}} \label{lp:mhv}\\
& \tn{subject to} \quad & \sum_{i=1}^k {y_j^i} 			& = 1, \quad & \forall v_j & \in V\\
&	& y_j^i & = 1, \quad & \forall v_j 					& \in V, ~\forall i \in C ~\tn{s.t.} ~c(v_j) = i\\
&	& z_j^i & = \min_{v_h \in N[v_j]} \{y_h^i\}, \quad 	& \forall v_j & \in V, ~\forall i \in C\label{eq32}\\
&	& z_j 	& = \sum_{i=1}^k z_j^i, \quad 				& \forall v_j & \in V\label{lp:mhv:z}\\
&	& z_j, ~z_j^i, ~y_j^i & \ge 0, \quad 				& \forall v_j & \in V, ~\forall i \in C
\end{alignat}
where $z_j^i$ indicates whether the vertex $v_j$ is happy by color $i$, $z_j$ indicates whether the vertex $v_j$ is happy,
and $N[v_j]$ is the closed neighborhood of the vertex $v_j$.

For each color $i$, since there is at least one vertex pre-colored $i$ and at least one vertex pre-colored another color (due to $k \ge 2$),
we let $\pi$ be the permutation for $\mbfy_i$ such that $1 = y_{\pi_1}^i \ge y_{\pi_2}^i \ge \ldots \ge y_{\pi_n}^i = 0$.
In the concave relaxation (\ref{cp:supml}) based on the Lov\'{a}sz extension for Sup-ML,
when we set the supermodular set function $g(\cdot)$ as in Eq.~(\ref{eq2}), the objective function of (\ref{cp:supml}) becomes
\begin{align}
\sum_{i = 1}^k \hat{g}(\mbfy_i)
	& = \sum_{i = 1}^k \sum_{j=1}^{n-1} \left(y^i_{\pi_j} - y^i_{\pi_{j+1}}\right) g(\{v_{\pi_1}, v_{\pi_2}, \ldots, v_{\pi_j}\}) \notag\\
	& = \sum_{i = 1}^k \sum_{j=1}^{n-1} \left(y^i_{\pi_j} - y^i_{\pi_{j+1}}\right) \sum_{v_h \in \iota(\{v_{\pi_1}, v_{\pi_2}, \ldots, v_{\pi_j}\})} w_h.
	\label{eq:muhv:cg}
\end{align}

For each vertex $v_p \in V$, let $v_q$ denote its neighbor that appears the last in the permutation $(v_{\pi_1}, v_{\pi_2}, \ldots, v_{\pi_n})$.
Assume $p = \pi_{j_1}$ and $q = \pi_{j_2}$.
Clearly, $v_p \in \iota(\{v_{\pi_1}, v_{\pi_2}, \ldots, v_{\pi_j}\})$ if and only if $p, q \in \{\pi_1, \pi_2, \ldots, \pi_j\}$,
that is, we must have $j_1, j_2 \le j$.
It follows that for the vertex $v_p \in V$, the coefficient of $w_p$ in Eq.~(\ref{eq:muhv:cg}) is
\begin{equation*}
	\sum_{i = 1}^k \sum_{j = \max\{j_1, j_2\}}^n \left(y^i_{\pi_j} - y^i_{\pi_{j+1}}\right)
= 	\sum_{i = 1}^k z^i_p
=	z_p,
\end{equation*}
where the last two equalities hold due to Constraints~(\ref{eq32}, \ref{lp:mhv:z}) of (\ref{lp:mhv}).
This shows that by setting the supermodular set function $g(\cdot)$ as defined in Eq.~(\ref{eq2}), (\ref{cp:supml}) is the same as (\ref{lp:mhv}).
Therefore, we have the following theorem.

\begin{theorem}
\label{thm5}
The LP relaxation for the MHV problem (\ref{lp:mhv}) is the same as the relaxation based on the Lov\'{a}sz extension for the Sup-ML problem (\ref{cp:supml}),
when the MHV problem is cast into the Sup-ML problem.
\end{theorem}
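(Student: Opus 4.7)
The goal is to show that when the supermodular function $g(\cdot)$ from Eq.~(\ref{eq2}) is substituted into (\ref{cp:supml}), the resulting concave program coincides with (\ref{lp:mhv}). My plan is to reduce both programs to an optimization over the $y$-variables alone, and then verify that their objective functions agree coefficient-by-coefficient in the vertex weights $w_p$.

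First I would observe that the constraints on the $y_j^i$ are literally identical in the two formulations (partition-of-unity across colors, preassignment, nonnegativity). The variables $z_j^i$ and $z_j$ in (\ref{lp:mhv}) are auxiliary: constraint~(\ref{eq32}) pins down $z_j^i = \min_{v_h \in N[v_j]} y_h^i$, and (\ref{lp:mhv:z}) then gives $z_j = \sum_{i=1}^k \min_{v_h \in N[v_j]} y_h^i$. After eliminating these auxiliary variables, it suffices to establish the pointwise identity
\begin{equation*}
\sum_{i=1}^k \hat{g}(\mbfy_i) \;=\; \sum_{p=1}^n w_p \sum_{i=1}^k \min_{v_h \in N[v_p]} y_h^i
\end{equation*}
on the common feasible region.

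To prove this identity I would expand the left-hand side one color at a time. Fix $i$ and let $\pi$ be the permutation with $1 = y^i_{\pi_1} \ge \cdots \ge y^i_{\pi_n} = 0$ (legitimate because each color has at least one preassigned vertex and, since $k \ge 2$, at least one vertex preassigned another color). Substituting $g(X) = \sum_{v_p \in \iota(X)} w_p$ into the Lov\'{a}sz extension~(\ref{eq3}) and swapping the order of summation, the contribution of color $i$ becomes $\sum_{p=1}^n w_p \, \alpha_{p,i}$, where $\alpha_{p,i}$ collects $(y^i_{\pi_j} - y^i_{\pi_{j+1}})$ over the indices $j$ with $v_p \in \iota(\{v_{\pi_1}, \ldots, v_{\pi_j}\})$. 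By the definition of the interior, $v_p$ lies in that set iff $N[v_p] \subseteq \{v_{\pi_1}, \ldots, v_{\pi_j}\}$, i.e.\ iff $j \ge m$, where $m := \max\{\pi^{-1}(h) : v_h \in N[v_p]\}$ is the latest position occupied by a neighbor (or by $v_p$ itself). Hence $\alpha_{p,i}$ telescopes to $y^i_{\pi_m} - y^i_{\pi_n} = y^i_{\pi_m}$, and since $\pi$ sorts the $y^i$-values in decreasing order, $y^i_{\pi_m}$ is exactly $\min_{v_h \in N[v_p]} y_h^i$. Summing over $i$ and $p$ gives the desired identity.

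The main technical wrinkle I anticipate is bookkeeping: one has to handle the degenerate case $m = n$, where the telescoping yields $y^i_{\pi_n} = 0$; this still matches the right-hand side, because $v_{\pi_n}$ then belongs to $N[v_p]$ and forces $\min_{v_h \in N[v_p]} y_h^i = 0$. Beyond this routine check and the small issue of reading the nonlinear constraint (\ref{eq32}) as the usual linearization $z_j^i \le y_h^i$ coupled with maximization, the argument is a direct expansion of the Lov\'{a}sz extension and I do not foresee any deeper difficulty.
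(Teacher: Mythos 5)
Your proposal is correct and follows essentially the same route as the paper's own argument: expand the Lov\'{a}sz extension of $g$, swap the order of summation, observe that $v_p$ lies in $\iota(\{v_{\pi_1},\ldots,v_{\pi_j}\})$ exactly when $j$ is at least the latest position occupied by a member of $N[v_p]$, and telescope to recover $\min_{v_h \in N[v_p]} y_h^i = z_p^i$. The only cosmetic difference is that the paper indexes this threshold via $\max\{j_1, j_2\}$ (positions of $v_p$ and of its last-appearing neighbor) rather than your $m$ over the closed neighborhood, which is the same quantity.
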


We construct an instance $I = (G = (V, E), w(\cdot), C = \{1, 2, \ldots, k\}, c)$ of the MHV problem
to obtain an upper bound on the integrality gap of (\ref{lp:mhv}), the LP relaxation for the MHV problem. 

\begin{itemize}
\item
	Let $T = \{t_1, t_2, \ldots, t_k\}$ be a set of $k$ pre-colored vertices, called {\em terminals};
	all terminals have the same weight $w_t \ge 0$, and the terminal $t_i$ is pre-colored $i$, {\it i.e.} $c(t_i) = i$.
\item
	Associated with each pair of distinct terminals $t_i$ and $t_j$, $i < j$, there is a vertex $b_{\{ij\}}$.
	Let $V_b = \{b_{\{ij\}} \mid i < j\}$, then $|V_b| = {k \choose 2}$;
	all vertices of $V_b$ have the same weight $w_b \ge 0$, and none of them is pre-colored.
\item
	The vertex set $V = T \cup V_b$;
	the edge set $E = \{\{t_i, b_{\{ij\}}\}, \{t_j, b_{\{ij\}}\} \mid i < j\}$.
	Clearly, $|V| = k + {k \choose 2}$ and $|E| = 2 {k \choose 2}$.
\end{itemize}

Let $c^*$ denote a coloring function that completes the given partial coloring function $c$,
that is, $c^*$ assigns a color for each vertex and it assigns the color $i$ to the terminal $t_i$, for each $i \in C$.
Then,
\begin{itemize}
\item
	all vertices of $V_b$ must be unhappy, since the vertex $b_{\{ij\}}$ is adjacent to two terminals $t_i$ and $t_j$ colored with distinct colors;
\item
	the terminal $t_i$ is adjacent to $k - 1$ vertices $\{b_{\{ij\}} \mid j \ne i\}$, while the vertex $b_{\{ij\}}$ is adjacent to the terminals $t_i$ and $t_j$;
	it follows that if $t_i$ is happy, then all vertices of $\{b_{\{ij\}} \mid j \ne i\}$ are colored $i$,
	subsequently none of the other terminals can be happy;
	in other words, at most one of the $k$ terminals can be happy, regardless of what the coloring function $c^*$ is.
\end{itemize}

Let $\OPT(\tn{MHV})$ denote the value of an optimal solution to the constructed instance $I$;
we obtain
\begin{equation}
\label{eq36}
\OPT(\tn{MHV}) \le w_t.
\end{equation}

Consider the following fractional feasible solution to the instance $I$ in the LP relaxation (\ref{lp:mhv}):
\begin{itemize}
\item
	for each terminal $t_i \in T$, $y^i(t_i) = 1$ and $y^j(t_i) = 0$ for all $j \ne i$;
\item
	for each vertex $b_{\{ij\}} \in V_b$, $y^i(b_{\{ij\}}) = y^j(b_{\{ij\}}) = \frac 12$ and $y^\ell(b_{\{ij\}}) = 0$ for all $\ell \ne i, j$;
\item
	for each terminal $t_i \in T$, we set $z^i(t_i) = y^i(b_{\{ij\}}) = \frac 12$, 
	$z^j(t_i) = 0$ for all $j \ne i$, 
	and $z(t_i) = \sum_{\ell=1}^k z^\ell(t_i) = \frac 12$;
\item
	for each vertex $b_{\{ij\}} \in V_b$, we set $z^\ell(b_{\{ij\}}) = 0$ for all $\ell \in C$, and $z(b_{\{ij\}}) = 0$.
\end{itemize}

Let $\OPT(\tn{LP-MHV})$ denote the optimum of the instance $I$ in the LP relaxation (\ref{lp:mhv}).
It is greater than or equal to the value of the above fractional feasible solution, that is,
\begin{equation}
\label{eq37}
\OPT(\tn{LP-MHV}) \ge \frac 12 k w_t.
\end{equation}

Combining Eqs.~(\ref{eq36}) and (\ref{eq37}), it gives an upper bound on the integrality gap of (\ref{lp:mhv}):
\[
\frac {\OPT(\tn{MHV})}{\OPT(\tn{LP-MHV})} \le \frac {1}{\frac 12 k} = \frac 2k.
\]
We thus have proved the following theorem.

\begin{theorem}
\label{thm6}
The integrality gap of (\ref{lp:mhv}) has an upper bound of $\frac 2k$.
\end{theorem}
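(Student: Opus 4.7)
The plan is to package the explicit instance $I$ and the two bounds already derived above into a single integrality-gap statement. Concretely, I would justify (\ref{eq36}) rigorously, certify that the fractional assignment producing (\ref{eq37}) is genuinely feasible for (\ref{lp:mhv}), and then take the ratio of the two bounds.

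First I would formally argue $\OPT(\tn{MHV}) \le w_t$ via the gadget's forcing structure. Each pair vertex $b_{\{ij\}}$ is adjacent to two terminals pre-colored with distinct colors $i$ and $j$, so it is unhappy under every completion of $c$; the only candidates contributing to the happy weight are therefore the terminals. A short case analysis then rules out two simultaneously happy terminals: if $t_i$ is happy then every neighbor $b_{\{ij\}}$ with $j \ne i$ must also receive color $i$, which in turn prevents $t_j$ from being happy since $b_{\{ij\}}$ now carries color $i \ne j$. Hence at most one terminal is happy in any completion of $c$, yielding the bound $w_t$.

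Second I would verify feasibility of the fractional solution given above for (\ref{lp:mhv}), which essentially reduces to checking (\ref{eq32}). At each terminal $t_i$ one computes $\min_{v_h \in N[t_i]} y^i_h = \min\{1, \frac12, \ldots, \frac12\} = \frac12$, matching the chosen $z^i(t_i) = \frac12$; at each pair vertex $b_{\{ij\}}$ the minimum over its closed neighborhood in every color is $0$, justifying $z^\ell(b_{\{ij\}}) = 0$ for all $\ell$. The pre-coloring constraints and the simplex constraints on $\mathbf{y}$ are satisfied by construction. Each of the $k$ terminals then contributes $\frac12 w_t$ to the LP objective, giving (\ref{eq37}).

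Finally, dividing (\ref{eq36}) by (\ref{eq37}) produces the stated integrality-gap bound of $\frac{2}{k}$. The main obstacle is really only the forcing argument of the first step; once the combinatorial fact that at most one terminal can be happy is established, the LP feasibility check and the closing arithmetic are essentially bookkeeping.
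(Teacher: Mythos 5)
Your proposal is correct and follows essentially the same route as the paper: the same gadget instance, the same forcing argument showing at most one terminal can be happy (hence $\OPT(\tn{MHV}) \le w_t$), the same half-integral fractional solution of value $\frac{1}{2} k w_t$, and the same concluding ratio. The extra care you take in verifying feasibility of the fractional solution against Constraint~(\ref{eq32}) is a welcome but minor elaboration of what the paper leaves implicit.
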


Theorems \ref{thm4} and \ref{thm6} together imply that the $\frac 2k$-approximation algorithm $\mcr\mcr$ for the MHV problem is the best possible
based on the LP relaxation (\ref{lp:mhv}), and furthermore

\begin{corollary}
\label{coro7}
The $\frac 2k$-approximation algorithm $\mcr\mcr$ for the Sup-ML problem is the best possible
based on the concave relaxation on the Lov\'{a}sz extension (\ref{cp:supml}).
\end{corollary}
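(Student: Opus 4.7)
The plan is to lift the integrality gap construction of Theorem~\ref{thm6} from the LP relaxation (\ref{lp:mhv}) to the concave program (\ref{cp:supml}). First, I would recall that the MHV problem is a special case of the Sup-ML problem, obtained by taking the supermodular function $g(\cdot)$ defined in Eq.~(\ref{eq2}); and by Theorem~\ref{thm5}, on every MHV instance the LP relaxation (\ref{lp:mhv}) and the concave relaxation (\ref{cp:supml}) have exactly the same feasible region and objective, hence the same optimum. Consequently, the specific instance $I = (G,w,C,c)$ constructed in Theorem~\ref{thm6} is simultaneously a Sup-ML instance on which
\[
\frac{\OPT(\tn{MHV})}{\OPT(\tn{CP-Sup-ML})} \;=\; \frac{\OPT(\tn{MHV})}{\OPT(\tn{LP-MHV})} \;\le\; \frac{2}{k}.
\]

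Next, I would invoke the standard principle that, in a maximization setting, no algorithm that compares its output to the optimum of a relaxation can guarantee a ratio better than the worst-case integrality gap of that relaxation. More concretely, on the instance $I$ any feasible integral solution has value at most $\OPT(\tn{MHV}) \le w_t$, while $\OPT(\tn{CP-Sup-ML}) \ge \tfrac{1}{2}kw_t$, so the ratio of any rounded solution to the relaxation optimum is at most $\tfrac{2}{k}$. Combined with Theorem~\ref{thm4} (and Corollary~\ref{coro2} for the general Sup-ML setting), which asserts that $\mcr\mcr$ already attains this ratio, the value $\frac{2}{k}$ is both an achievable lower bound and an unavoidable upper bound on the approximation ratio produced via (\ref{cp:supml}), yielding the corollary.

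The only subtlety, rather than a genuine obstacle, is to verify that the instance $I$ from the proof of Theorem~\ref{thm6} constitutes a bona fide Sup-ML instance in the sense required by (\ref{cp:supml}): the terminals $T = \{t_1,\ldots,t_k\}$ play the role of the pre-labeled sets $T_i = \{t_i\}$, the supermodular function is the $g(\cdot)$ from Eq.~(\ref{eq2}) on the constructed graph, and the fractional feasible solution exhibited in the proof of Theorem~\ref{thm6} is feasible for (\ref{cp:supml}) with the same objective value (by Theorem~\ref{thm5}). Once this identification is recorded, the corollary is an immediate synthesis of Theorems~\ref{thm4}, \ref{thm5}, and~\ref{thm6} and requires no further construction or calculation.
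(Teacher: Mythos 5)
Your proposal is correct and follows essentially the same route as the paper: the paper also obtains Corollary~\ref{coro7} by combining the equivalence of (\ref{lp:mhv}) and (\ref{cp:supml}) from Theorem~\ref{thm5} with the integrality gap instance of Theorem~\ref{thm6} and the achievability guarantee of Theorem~\ref{thm4}. Your explicit check that the constructed MHV instance is a legitimate Sup-ML instance (with $T_i = \{t_i\}$ and $g(\cdot)$ from Eq.~(\ref{eq2})) is left implicit in the paper but is exactly the intended identification.
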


\section{The minimum unhappy vertices (MUHV) problem}
\label{sec:muhv}
Recall that the MUHV problem can be cast as finding a partition ${\cal S} = \{S_1, S_2, \ldots, S_k\}$ of the vertex set $V$
such that $f({\cal S}) = \sum_{i=1}^k f(S_i)$ is minimized,
where the set function $f(\cdot)$ is defined in Eq.~(\ref{eq1}) and $S_i$ is the subset of vertices colored $i$, for each $i$.

\begin{lemma}
\label{lemma8}
The set function $f(\cdot)$ defined in Eq.~(\ref{eq1}) is submodular.
\end{lemma}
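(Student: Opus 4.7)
The plan is to exploit the four boundary properties (i)–(iv) already stated for $\partial(\cdot)$, together with the elementary identity
$$w(A) + w(B) = w(A \cup B) + w(A \cap B) \quad \text{for any } A, B \subseteq V,$$
which holds because $w$ is defined as the sum of non-negative vertex weights. The task then is to establish
$$f(X) + f(Y) \;=\; w(\partial(X)) + w(\partial(Y)) \;\ge\; w(\partial(X \cap Y)) + w(\partial(X \cup Y)) \;=\; f(X \cap Y) + f(X \cup Y).$$

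First, I would apply the inclusion--exclusion identity to the right-hand side with $A = \partial(X \cap Y)$ and $B = \partial(X \cup Y)$, rewriting it as
$$w\bigl(\partial(X \cap Y) \cup \partial(X \cup Y)\bigr) \;+\; w\bigl(\partial(X \cap Y) \cap \partial(X \cup Y)\bigr).$$
Next, I would invoke properties (ii) and (iii) in tandem: since each of $\partial(X \cap Y)$ and $\partial(X \cup Y)$ is contained in $\partial(X) \cup \partial(Y)$, so is their union; and property (iv) directly gives that their intersection is contained in $\partial(X) \cap \partial(Y)$. By non-negativity and hence monotonicity of $w$, the two terms above are bounded by $w(\partial(X) \cup \partial(Y))$ and $w(\partial(X) \cap \partial(Y))$, respectively. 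One final application of the inclusion--exclusion identity, now in the reverse direction, collapses their sum to $w(\partial(X)) + w(\partial(Y)) = f(X) + f(Y)$, completing the submodularity inequality.

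I do not expect a real obstacle in this derivation: the conceptual work has already been carried out by the authors in proving properties (i)--(iv) in Appendix A, and the remainder is three bookkeeping steps. The only subtlety worth flagging is that properties (ii) and (iii) are needed \emph{together} to handle the union side, whereas property (iv) is tailored precisely to the intersection side of inclusion--exclusion, so the four conditions as stated form exactly the minimal toolkit the argument requires. It would be worth noting in the write-up that this same template shows that $w \circ \partial$ is submodular whenever $\partial$ satisfies (ii)--(iv) and $w$ is any non-negative additive measure, a remark that also explains why contracting pre-colored vertices (which changes $E$ and therefore the operator $\partial$) can destroy submodularity, as the authors warn in the introduction.
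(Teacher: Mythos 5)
Your proposal is correct and follows essentially the same route as the paper's own proof in Appendix B: combine properties (ii) and (iii) to bound the union, use property (iv) to bound the intersection, and pass through the additive identity $w(A)+w(B)=w(A\cup B)+w(A\cap B)$ together with the monotonicity of $w$ on subsets. The only difference is presentational; your closing remark about the general template $w\circ\partial$ is a nice observation but not needed for the lemma.
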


\begin{theorem}
\label{thm9}
Algorithm $\mcr\mcr$ is a $(2 - \frac{2}{k})$-approximation for the MUHV problem, which is a special case of the Sub-ML problem.
\end{theorem}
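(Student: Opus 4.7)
The plan is to show that the MUHV problem fits the Sub-ML framework exactly, and then invoke Theorem~\ref{thm1}.

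First, I would verify that the MUHV objective coincides with $f({\cal S}) = \sum_{i=1}^k f(S_i)$, where $f$ is the set function defined in Eq.~(\ref{eq1}). Given any coloring extension $c^*$ of the partial coloring $c$, let $S_i$ denote the subset of vertices colored $i$; then ${\cal S} = \{S_1, \ldots, S_k\}$ is a partition of $V$. A vertex $v \in S_i$ is unhappy precisely when it has some neighbor outside $S_i$, i.e., when $v \in \partial(S_i)$. Since the parts are disjoint, the total weight of unhappy vertices equals $\sum_{i=1}^k w(\partial(S_i)) = \sum_{i=1}^k f(S_i) = f({\cal S})$. Thus minimizing the total weight of unhappy vertices is equivalent to minimizing $f({\cal S})$ over all partitions of $V$ extending the pre-coloring.

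Second, I would verify that $f$ and the pre-coloring meet the requirements of a Sub-ML instance. Non-negativity of $f$ follows from non-negativity of the vertex weights, and $f(\emptyset) = w(\partial(\emptyset)) = 0$ from the first boundary property stated in the excerpt. Submodularity of $f$ is exactly Lemma~\ref{lemma8}. The partial coloring $c$ pre-assigns each color $i \in C = \{1, \ldots, k\}$ to a non-empty subset $T_i \subset V$ (we assumed without loss of generality that every color is used), which serves as the partial labeling function $\ell$ in the Sub-ML formulation. Hence every MUHV instance $(G,w,C,c)$ translates into a Sub-ML instance $(V, f, L, \ell)$ with the same optimum value.

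Finally, I would apply Theorem~\ref{thm1}: Algorithm $\mcr\mcr$ produces a feasible labeling ${\cal S}$ with expected cost at most $(2 - \tfrac{2}{k})$ times the optimum of (\ref{cp:subml}), which is a lower bound on the optimum of the Sub-ML instance, and therefore on $\OPT(\tn{MUHV})$. The returned labeling respects the pre-coloring constraints (\ref{cp:subml:t}), so interpreting each label as a color yields a valid completion of $c$ whose expected total unhappy weight is within a factor $(2 - \tfrac{2}{k})$ of $\OPT(\tn{MUHV})$. Since there is essentially no computational obstacle once Lemma~\ref{lemma8} is in hand, the only non-trivial point to justify carefully is the first step --- that the MUHV objective aggregates across the parts without any double-counting --- which follows immediately from the disjointness of the $S_i$.
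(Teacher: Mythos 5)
Your proposal is correct and follows exactly the route the paper intends: cast the MUHV objective as $\sum_{i=1}^k f(S_i)$ with $f(X)=w(\partial(X))$, check the Sub-ML prerequisites (non-negativity, $f(\emptyset)=0$, submodularity via Lemma~\ref{lemma8}, non-empty pre-colored sets $T_i$), and invoke Theorem~\ref{thm1}. Nothing is missing.
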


Given an instance of the MUHV problem,
we use a binary variable $y_j^i := y^i(v_j)$ to denote whether or not the vertex $v_j$ is colored $i$,
and $\mbfy_i = (y_1^i, y_2^i, \ldots, y_n^i)$.
We can then formulate a novel LP relaxation as follows.
\begin{alignat}{6}
& \tn{minimize} \quad 	& \sum_{j=1}^n w_j x_j \tag{\tn{LP-MUHV}}\label{lp:muhv}\\
& \tn{subject to} \quad & \sum_{i=1}^k y_j^i 	&= 1, 						&\forall v_j & \in V\\
&						& y_j^i 				&= 1, 						&\forall v_j & \in V, ~\forall i \in C ~\tn{s.t.} ~c(v_j) = i\\
&						& x_j^i 				&\ge y_j^i - y_h^i, 		&\quad\forall v_j & \in V, ~\forall v_h \in N(v_j), ~\forall i \in C \label{eq14}\\
&						& x_j 					&= \sum_{i = 1}^k x_j^i, 	&\forall v_j & \in V \label{eq15}\\
&						& y_j^i, ~x_j^i, ~x_j 	&\ge 0,						&\forall v_j & \in V, ~\forall i \in C
\end{alignat}
where $x_j$ indicates whether the vertex $v_j$ is unhappy, and $N(v_j)$ is the set of neighbors of $v_j$.

For each color $i$, noting there is at least one vertex pre-colored $i$ and at least one vertex pre-colored another color (due to $k \ge 2$),
we let $\pi$ be the permutation on $\{1, 2, \ldots, n\}$ for $\mbfy_i$ such that $1 = y_{\pi_1}^i \ge y_{\pi_2}^i \ge \ldots \ge y_{\pi_n}^i = 0$.
Then by setting the submodular set function $f(\cdot)$ as defined in Eq.~(\ref{eq1}), based on the definition of the Lov\'{a}sz extension in Eq.~(\ref{eq3}),
the objective function of the relaxation (\ref{cp:subml}) becomes
\begin{align}
\sum_{i = 1}^k \hat{f}(\mbfy_i)
	& = \sum_{i=1}^k \left(\sum_{j=1}^{n-1} \left(y^i_{\pi_j} - y^i_{\pi_{j+1}}\right) f(\{v_{\pi_1}, v_{\pi_2}, \ldots, v_{\pi_j}\})\right) \notag\\
	& = \sum_{i=1}^k \sum_{j=1}^{n-1} \left(y^i_{\pi_j} - y^i_{\pi_{j+1}}\right) \sum_{v_h \in \partial(\{v_{\pi_1}, v_{\pi_2}, \ldots, v_{\pi_j}\})} w_h. \label{eq:muhv:cf}
\end{align}
For each vertex $v_p \in V$, let $v_q$ denote its neighbor that appears the last in the permutation $(v_{\pi_1}, v_{\pi_2}, \ldots, v_{\pi_n})$.
Assume $p = \pi_{j_1}$ and $q = \pi_{j_2}$.
Clearly, $v_p \in \partial(\{v_{\pi_1}, v_{\pi_2}, \ldots, v_{\pi_j}\})$ if and only if
i) $p \in \{\pi_1, \pi_2, \ldots, \pi_j\}$ and 
ii) $q \notin \{\pi_1, \pi_2, \ldots, \pi_j\}$,
that is, we must have $j_1 \le j < j_2$.
It follows that for the vertex $v_p \in V$, the coefficient of $w_p$ in Eq.~(\ref{eq:muhv:cf}) is
\begin{equation*}
	\sum_{i = 1}^k \sum_{j = j_1}^{j_2 - 1} \left(y^i_{\pi_j} - y^i_{\pi_{j+1}}\right)
= 	\sum_{i = 1}^k \left(y^i_p - y^i_q\right)
= 	\sum_{i = 1}^k x^i_p
=	x_p,
\end{equation*}
where the last two equalities hold due to Constraints~(\ref{eq14}, \ref{eq15}) of (\ref{lp:muhv}).
This shows that by setting the submodular set function $f(\cdot)$ as defined in Eq.~(\ref{eq1}), (\ref{cp:subml}) is the same as (\ref{lp:muhv}).
Therefore, we have the following theorem.

\begin{theorem}
\label{thm10}
The LP relaxation for the MUHV problem (\ref{lp:muhv}) is the same as the relaxation based on the Lov\'{a}sz extension for the Sub-ML problem (\ref{cp:subml}),
when the MUHV problem is cast into the Sub-ML problem.
\end{theorem}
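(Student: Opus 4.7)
The plan is to show that (\ref{lp:muhv}) and the specialization of (\ref{cp:subml}) with $f(X) = w(\partial(X))$ are equivalent, by matching their objectives coefficient-by-coefficient after observing that their $y$-variable feasible regions are literally the same.

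First I would note that the constraints involving only the $y_j^i$ are identical in the two programs: $\sum_i y_j^i = 1$, $y_j^i = 1$ whenever $v_j$ is pre-colored $i$, and $y_j^i \ge 0$. So equivalence reduces to matching the two objectives as functions of $\mbfy = (\mbfy_1, \ldots, \mbfy_k)$. On the LP side, constraint (\ref{eq14}) together with $x_j^i \ge 0$ determines, for any fixed $y$, the minimizing value $x_p^i = \max\bigl\{0,\ \max_{v_h \in N(v_p)}(y_p^i - y_h^i)\bigr\}$, and then $x_p = \sum_i x_p^i$ by (\ref{eq15}). So the LP objective, after minimizing over $x$, becomes a function of $y$ alone.

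Next I would expand $\hat f(\mbfy_i)$ via the Lov\'{a}sz extension (\ref{eq3}) with $f(X) = w(\partial(X))$. Fixing a color $i$, the permutation $\pi$ sorts $\mbfy_i$ in decreasing order; for each vertex $v_p$ at position $j_1$, letting $j_2$ be the position of its latest-in-$\pi$ neighbor $v_q$, the membership $v_p \in \partial(\{v_{\pi_1}, \ldots, v_{\pi_j}\})$ is equivalent to $j_1 \le j < j_2$. Swapping the order of summation, the coefficient of $w_p$ from color $i$ telescopes to $y_{\pi_{j_1}}^i - y_{\pi_{j_2}}^i = y_p^i - y_q^i$ when $j_1 < j_2$, and to $0$ otherwise. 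Since $v_q$ is the last neighbor in decreasing order we have $y_q^i = \min_{v_h \in N(v_p)} y_h^i$, so this coefficient is precisely $\max\{0, \max_h(y_p^i - y_h^i)\}$, matching the LP value computed above. Summing over $i$ and over $p$ gives that $w_p$ carries coefficient $x_p$ in both objectives, closing the argument.

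The main obstacle I expect is the edge-case bookkeeping around boundary membership: when a vertex's latest neighbor in $\pi$ precedes the vertex itself (so $j_1 > j_2$), the telescoping sum is empty while on the LP side it is the nonnegativity floor $x_p^i \ge 0$ that zeros out the corresponding term; similarly isolated vertices must be handled. Verifying that these two zeros line up is the only part of the argument that is not purely mechanical --- once this is cleanly in place, the match of coefficients between the Lov\'{a}sz extension and the LP is a straightforward telescoping computation.
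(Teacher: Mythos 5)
Your proposal follows essentially the same route as the paper's proof: expand $\hat f(\mbfy_i)$ via the Lov\'{a}sz extension with $f(X) = w(\partial(X))$, characterize membership of $v_p$ in $\partial(\{v_{\pi_1},\ldots,v_{\pi_j}\})$ by $j_1 \le j < j_2$, telescope the coefficient of $w_p$ to $y_p^i - y_q^i$, and identify the total with $x_p$ via Constraints~(\ref{eq14}) and~(\ref{eq15}). If anything you are slightly more careful than the paper, which writes the coefficient directly as $y_p^i - y_q^i = x_p^i$ without spelling out that (\ref{eq14}) is an inequality whose minimizer is $\max\{0, \max_{v_h \in N(v_p)}(y_p^i - y_h^i)\}$, nor that the empty telescoping sum in the $j_1 > j_2$ case is matched by the nonnegativity floor on $x_p^i$.
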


We use the same instance $I = (G = (V, E), w(\cdot), C = \{1, 2, \ldots, k\}, c)$ constructed in the last section
to obtain a lower bound on the integrality gap of (\ref{lp:muhv}), the LP relaxation for the MUHV problem. 
Let $\OPT(\tn{MUHV})$ denote the optimum of the above constructed instance $I$.
From Eq.~(\ref{eq36}) we have
\begin{equation}
\label{eq18}
\OPT(\tn{MUHV}) \ge (k - 1) w_t + {k \choose 2} w_b.
\end{equation}

Let us consider the following fractional feasible solution to the instance $I$ in the LP relaxation (\ref{lp:muhv}):
\begin{itemize}
\item
	for each terminal $t_i \in T$, $y^i(t_i) = 1$ and $y^j(t_i) = 0$ for all $j \ne i$;
\item
	for each vertex $b_{\{ij\}} \in V_b$, $y^i(b_{\{ij\}}) = y^j(b_{\{ij\}}) = \frac 12$ and $y^\ell(b_{\{ij\}}) = 0$ for all $\ell \ne i, j$;
\item
	for each terminal $t_i \in T$, we set $x^i(t_i) = y^i(t_i) - y^i(b_{\{ij\}}) = \frac 12$, 
	$x^j(t_i) = 0$ for all $j \ne i$, 
	and $x(t_i) = \sum_{\ell=1}^k x^\ell(t_i) = \frac 12$;
\item
	for each vertex $b_{\{ij\}} \in V_b$, we set $x^i(b_{\{ij\}}) = y^i(b_{\{ij\}}) - y^i(t_j) = \frac 12$, 
	$x^j(b_{\{ij\}}) = y^j(b_{\{ij\}}) - y^j(t_i) = \frac 12$, 
	$x^\ell(b_{\{ij\}}) = 0$ for all $\ell \ne i, j$,
	and $x(b_{\{ij\}}) = \sum_{\ell=1}^k x^\ell(b_{\{ij\}}) = 1$.
\end{itemize}

Let $\OPT(\tn{LP-MUHV})$ denote the optimum of the instance $I$ in the LP relaxation (\ref{lp:muhv}).
It is no greater than the value of the above fractional feasible solution, that is,
\begin{equation}
\label{eq19}
\OPT(\tn{LP-MUHV}) \le \frac 12 k w_t + {k \choose 2} w_b.
\end{equation}

Combining Eqs.~(\ref{eq18}) and (\ref{eq19}) and setting $w_b = 0$, it gives a lower bound on the integrality gap of (\ref{lp:muhv}):
\[
\frac {\OPT(\tn{MUHV})}{\OPT(\tn{LP-MUHV})} \ge \frac {k-1}{\frac 12 k} = 2 - \frac 2k.
\]
We thus have proved the following theorem.

\begin{theorem}
\label{thm11}
The integrality gap of (\ref{lp:muhv}) has a lower bound of $2 - \frac 2k$.
\end{theorem}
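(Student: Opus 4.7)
The plan is to exhibit an explicit MUHV instance on which LP-MUHV fractionally pays far less than any integral assignment must. The natural candidate is the same subdivided-$K_k$ gadget already used in Section~\ref{sec:mhv} to bound the MHV integrality gap: $k$ terminals $t_1,\ldots,t_k$ with $c(t_i)=i$ and weight $w_t$, together with a subdivision vertex $b_{\{ij\}}$ of weight $w_b$ on every pair $i<j$, each connected by an edge to $t_i$ and $t_j$. Reusing this construction is appealing because the integer side of the MHV analysis already tells us almost everything we need about integral colorings.

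Step 1 is to lower-bound $\OPT(\tn{MUHV})$. Every $b_{\{ij\}}$ has the two differently pre-colored neighbors $t_i$ and $t_j$, so whatever color it receives it must disagree with at least one of them and is therefore unhappy; this contributes $\binom{k}{2} w_b$ unavoidably. For the terminals, I would invoke the observation already established in the MHV discussion (that at most one terminal can be happy, since $t_i$ being happy forces every $b_{\{ij\}}$ to color $i$, contradicting any other $t_j$'s happiness), which yields at least $k-1$ unhappy terminals and thus a contribution of $(k-1)w_t$. Summing gives $\OPT(\tn{MUHV}) \ge (k-1) w_t + \binom{k}{2} w_b$.

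Step 2 is to upper-bound $\OPT(\tn{LP-MUHV})$ by exhibiting a feasible fractional solution. The idea is to symmetrize the pair vertex $b_{\{ij\}}$ between the two colors it is pulled towards: set $y^i(t_i)=1$, $y^i(b_{\{ij\}})=y^j(b_{\{ij\}})=\tfrac12$, and all other $y$-variables zero. The $x$-variables are then driven by Constraint~(\ref{eq14}). For a terminal $t_i$ and its neighbor $b_{\{ij\}}$, the only positive difference is $y^i(t_i)-y^i(b_{\{ij\}})=\tfrac12$, so $x^i(t_i)=\tfrac12$ and $x(t_i)=\tfrac12$. For a pair vertex $b_{\{ij\}}$, the constraint against neighbor $t_j$ in color $i$ gives $x^i(b_{\{ij\}})\ge \tfrac12$, and symmetrically $x^j(b_{\{ij\}})\ge\tfrac12$, so $x(b_{\{ij\}})=1$. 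The total cost is $\tfrac{k}{2} w_t + \binom{k}{2} w_b$.

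Combining the two bounds and choosing $w_b=0$ to remove the unavoidable $\binom{k}{2} w_b$ term from both the numerator and denominator, the ratio becomes
\[
\frac{\OPT(\tn{MUHV})}{\OPT(\tn{LP-MUHV})} \ge \frac{(k-1)w_t}{\tfrac12 k\, w_t} = 2 - \frac{2}{k}.
\]
The only subtle point, and the step I would be most careful about, is the feasibility check in Step~2: Constraint~(\ref{eq14}) must hold against \emph{every} neighbor in \emph{every} color, so I must make sure no $y$-difference I overlooked exceeds $\tfrac12$. The symmetric $\{0,\tfrac12,1\}$ structure of the $y$-values on this gadget makes that verification immediate, which is precisely why this construction is chosen.
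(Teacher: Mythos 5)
Your proposal is correct and follows essentially the same route as the paper: the same subdivided-$K_k$ gadget, the same integral lower bound $(k-1)w_t + \binom{k}{2}w_b$ derived from the ``at most one happy terminal'' observation, the identical half-integral fractional solution of value $\frac{k}{2}w_t + \binom{k}{2}w_b$, and the same choice $w_b = 0$ to extract the ratio $2 - \frac{2}{k}$. No gaps; the feasibility check of Constraint~(\ref{eq14}) you flag as the subtle point does indeed go through exactly as you describe.
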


Theorems \ref{thm9} and \ref{thm11} together imply that the $(2 - \frac 2k)$-approximation algorithm $\mcr\mcr$ for the MUHV problem is the best possible
based on the LP relaxation (\ref{lp:muhv}), and furthermore

\begin{corollary}
\label{coro12}
The $(2 - \frac 2k)$-approximation algorithm $\mcr\mcr$ for the Sub-ML problem is the best possible
based on the convex relaxation on the Lov\'{a}sz extension (\ref{cp:subml}).
\end{corollary}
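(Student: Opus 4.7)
The plan is to lift the integrality-gap construction of Section 4 from the LP relaxation (\ref{lp:muhv}) to the convex relaxation (\ref{cp:subml}) by exploiting the two facts already established: MUHV is a special case of Sub-ML, and for every MUHV instance Theorem~\ref{thm10} identifies (\ref{lp:muhv}) with (\ref{cp:subml}) when the submodular function is taken to be $f(\cdot)$ from Eq.~(\ref{eq1}).

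First I would observe that the instance $I$ constructed in Section 4 for lower-bounding the integrality gap of (\ref{lp:muhv}) canonically induces a Sub-ML instance: the ground set is $V$, the submodular function is the boundary-weight function $f(X) = w(\partial(X))$, the label set is $C$, and the pre-assigned subsets $T_i = \{t_i\}$ come from the partial coloring $c$. Because MUHV is literally the restriction of Sub-ML in which $f(\cdot)$ is this boundary function, the integer optima of the two formulations coincide on $I$, that is, $\OPT(\tn{Sub-ML})=\OPT(\tn{MUHV})$ on this instance.

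Next I would invoke Theorem~\ref{thm10} to conclude that the convex program (\ref{cp:subml}) on this Sub-ML instance is term-by-term the same optimization problem as (\ref{lp:muhv}), so $\OPT(\tn{CP-Sub-ML}) = \OPT(\tn{LP-MUHV})$ on $I$. Applying Theorem~\ref{thm11} then yields
\[
\frac{\OPT(\tn{Sub-ML})}{\OPT(\tn{CP-Sub-ML})} \;=\; \frac{\OPT(\tn{MUHV})}{\OPT(\tn{LP-MUHV})} \;\ge\; 2 - \frac{2}{k},
\]
so the integrality gap of (\ref{cp:subml}) is at least $2-\frac{2}{k}$ already on the sub-family of Sub-ML instances arising from MUHV.

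Finally, combining this lower bound with Theorem~\ref{thm1}, which states that algorithm $\mcr\mcr$ rounds a solution of (\ref{cp:subml}) to an integer solution within factor $2-\frac{2}{k}$, shows that no algorithm that certifies its output by comparison with the value of (\ref{cp:subml}) can beat the ratio $2-\frac{2}{k}$; hence $\mcr\mcr$ is optimal in this relaxation-based sense. The only delicate point is step one, namely that the hard instance for the LP really does transfer to a hard instance for the CP, and this is immediate because Theorem~\ref{thm10} holds instance-wise rather than only as a statement about optimal values.
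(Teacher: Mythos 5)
Your proposal is correct and follows essentially the same route as the paper, which derives the corollary by combining the integrality-gap lower bound of Theorem~\ref{thm11} with the identification of (\ref{lp:muhv}) and (\ref{cp:subml}) from Theorem~\ref{thm10}; you simply spell out the instance-transfer step that the paper leaves implicit. No gaps.
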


In the {\em hypergraph multiway cut} (Hyp-MC) problem, 
we are given a hypergraph $H = (V_H, E_H)$ with a non-negative weight $w(e)$ for each hyperedge $e \in E_H$ and
a set of $k$ terminals $T = \{t_1, t_2, \ldots, t_k\} \subseteq V$.
The goal is to remove a minimum-weight set of hyperedges so that every two terminals are disconnected.
Ene {\it et al.}~\cite{EVW13} proved that a $(2 - \frac{2}{k} - \epsilon)$-approximation for Hyp-MC is NP-hard, for any $\epsilon > 0$,
assuming the Unique Games Conjecture.
We show next that it is also Unique Games-hard to achieve a $(2 - \frac{2}{k} - \epsilon)$-approximation for the MUHV problem.

\begin{theorem}
\label{thm13}
No $(2 - \frac{2}{k} - \epsilon)$-approximation algorithm for the MUHV problem exists, for any $\epsilon > 0$, assuming the Unique Games Conjecture.
\end{theorem}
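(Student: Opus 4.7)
The plan is to establish the hardness via an approximation-preserving reduction from the hypergraph multiway cut (Hyp-MC) problem, for which Ene \emph{et al.}~\cite{EVW13} already proved Unique Games-hardness of $(2 - \frac{2}{k} - \epsilon)$-approximation. Given a Hyp-MC instance $(H = (V_H, E_H), w, T = \{t_1, t_2, \ldots, t_k\})$, I would construct an MUHV instance $I = (G=(V,E), w', C, c)$ as follows. For each hyperedge $e \in E_H$ introduce a fresh vertex $v_e$, and set $V = V_H \cup \{v_e : e \in E_H\}$. For every $e \in E_H$ and every $u \in e$, add the edge $\{v_e, u\}$ to $E$; thus $v_e$ becomes the centre of a star whose leaves are exactly the vertices of $e$. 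Assign $w'(v_e) := w(e)$ for every hyperedge-vertex, and $w'(u) := 0$ for every $u \in V_H$. Pre-colour $c(t_i) := i$ for each $i \in C = \{1, 2, \ldots, k\}$ and leave every other vertex uncoloured; this respects the standing assumption that each colour is used in the given partial colouring.

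The central claim to verify is $\OPT(\tn{MUHV}, I) = \OPT(\tn{Hyp-MC}, H)$. Only the vertices $v_e$ carry non-zero weight, so only their happiness contributes to the MUHV objective. Observe that $v_e$ is happy in an extension $c^*$ iff every vertex in its closed neighbourhood $\{v_e\} \cup e$ shares one colour; equivalently, $v_e$ can be made happy iff $e$ is monochromatic under $c^*|_{V_H}$, in which case we simply pick $v_e$'s colour to match. Hence we may assume without loss of generality that $v_e$ is happy exactly when $e$ is monochromatic, and unhappy exactly when $e$ is cut by the partition $\{S_1, \ldots, S_k\}$ of $V_H$ with $S_i = (c^*)^{-1}(i) \cap V_H$. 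Since $t_i \in S_i$ for every $i$, this partition is a feasible multiway partition of $H$ separating the terminals, and its cut weight equals the MUHV objective value of $c^*$. Conversely, every feasible Hyp-MC partition of $V_H$ yields a colouring of $V_H$ that extends (by the rule above) to a colouring of $G$ whose MUHV cost equals the corresponding cut weight.

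Because the reduction preserves optimum values on the nose and uses the same $k$, any $\alpha$-approximation for MUHV translates directly into an $\alpha$-approximation for Hyp-MC. In particular, a hypothetical $(2 - \frac{2}{k} - \epsilon)$-approximation for MUHV would give the same approximation for Hyp-MC, contradicting the Unique Games-hardness of Ene \emph{et al.}~\cite{EVW13}. This yields the claimed inapproximability for MUHV.

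The main obstacle is not the hardness argument itself but ensuring the reduction is truly approximation-preserving at every ratio, which reduces to confirming two points: that without loss of generality a minimum-cost extension $c^*$ makes $v_e$ happy whenever $e$ is monochromatic (handled by the observation above, so any extension can be locally improved to this canonical form without increasing cost), and that the zero-weight vertices in $V_H$ neither create nor destroy solutions (immediate, since their happiness contributes $0$ and they may be assigned any colour). Once these are in place, the chain of equalities between optimal costs, together with the preservation of $k$, delivers the theorem.
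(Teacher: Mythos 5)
Your proposal is correct and follows essentially the same route as the paper: the identical gadget (a zero-weight copy of $V_H$ plus a vertex $v_e$ of weight $w(e)$ adjacent to the members of each hyperedge $e$, with only the terminals pre-coloured), and the same value-preserving correspondence between colourings of $G$ and hyperedge-removal sets in $H$, invoking the Unique Games-hardness of Hyp-MC from Ene \emph{et al.} The only cosmetic difference is that you phrase the correctness via monochromatic hyperedges and the induced partition of $V_H$, whereas the paper argues directly with paths and connected components; both arguments establish the same equality of optima and the same approximation preservation.
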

\begin{proof}
We prove the theorem by constructing an approximation preserving reduction from the Hyp-MC problem to the MUHV problem.

Given an instance $(H = (V_H, E_H), w(\cdot), T = \{t_1, t_2, \ldots, t_k\})$ of the Hyp-MC problem, 
we construct an instance $(G = (V, E), w'(\cdot), C = \{1, 2, \ldots, k\}, c)$ of MUHV as follows:
\begin{itemize}
\item
	for each hyperedge $e \in E_H$, we create a vertex $v_e$;
	let the vertex set be $V = V_H \cup V_E$, where $V_E = \{v_e \mid e \in E_H\}$;
	call $T = \{t_1, t_2, \ldots, t_k\} \subseteq V$ the terminal set;
\item
	for each vertex $v \in V_H$, its weight is $w'(v) = 0$;
	for each vertex $v_e \in V_E$, its weight is $w'(v_e) = w(e)$;
\item
	for each vertex $v_e \in V_E$, it is adjacent to every vertex of $e$;
	let the edge set be $E = \{\{v_e, v\} \mid e \in E_H, v \in e\}$;
\item
	let the color set be $C = \{1, 2, \ldots, k\}$ and let the partial coloring function $c: V \mapsto C$ pre-color the terminal $t_i$ with $i$.
\end{itemize}
We note that the graph $G$ is actually bipartite, and the two parts of vertices are $V_H$ and $V_E$.

Consider a simple path $P$ connecting two terminals $t_i$ and $t_j$ in the hypergraph $H = (V_H, E_H)$.
Every two consecutive vertices on $P$ must belong to a common hyperedge;
therefore, the path $P$ one-to-one corresponds to a simple path in the constructed graph $G = (V, E)$ connecting the two vertices $t_i$ and $t_j$,
which is also denoted as $P$ without any ambiguity.
For any coloring function $c^*$ that completes the given partial coloring function $c$, we have $c^*(t_i) = i$ for each $i = \{1, 2, \ldots, k\}$.
It follows that any simple path $P$ connecting $t_i$ and $t_j$ must contain at least one vertex $v_e \in V_E$
such that its preceding vertex and its succeeding vertex, both in $V_H$, are colored differently.
The vertex $v_e$ is thus unhappy under the coloring scheme $c^*$.
In the hypergraph $H$, removing the corresponding hyperedge $e$ breaks the path $P$, thus disconnecting $t_i$ and $t_j$ via the path $P$.
Therefore, removing all the hyperedges whose corresponding vertices in the graph $G$ are unhappy disconnects all pairs of terminals.
In other words, any solution to the constructed instance of the MUHV problem can be transferred into a feasible solution to the given instance of the Hyp-MC problem;
the transfer is done in linear time and the two solutions have exactly the same value.

Conversely, given a subset $E^*_H$ of hyperedges in the hypergraph $H = (V_H, E_H)$ whose removal disconnects all pairs of terminals,
let $V^i_H$ and $E^i_H$ denote the subsets of vertices and hyperedges in the connected component of the remainder hypergraph $(V_H, E_H - E^*_H)$
that contains the terminal $t_i$, for each $i = 1, 2, \ldots, k$.
Denote the vertex subsets in the constructed graph $G = (V, E)$ corresponding to $V^i_H$ and $E^i_H$ as $V^i_H$ and $V^i_E$, respectively,
for $i = 1, 2, \ldots, k$.
We complete the partial coloring function $c$ by coloring all vertices of $V^i_H \cup V^i_E$ with the color $i$, for $i = 1, 2, \ldots, k$,
and coloring all the other remaining vertices of $V$ with the color $1$.
Clearly, all vertices of $\{v_e \mid e \in E_H - E^*_H\}$ are happy;
due to every vertex of $V_H$ has weight $0$ (such that we may ignore its happiness),
we conclude that the total weight of unhappy vertices in this coloring scheme is no more than $w(E^*_H) := \sum_{e \in E^*_H} w(e)$.

In summary, the Hyp-MC problem is polynomial-time reducible to the MUHV problem, and our reduction preserves the value of any feasible solution
and consequently preserves the approximation ratio.
\end{proof}

\begin{corollary}
\label{coro14}
No $(2 - \frac{2}{k} - \epsilon)$-approximation algorithm for the Sub-ML problem exists, for any $\epsilon > 0$, assuming the Unique Games Conjecture.
\end{corollary}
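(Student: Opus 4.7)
The plan is to observe that Corollary~\ref{coro14} is an immediate consequence of Theorem~\ref{thm13} together with the fact, already established in Section~\ref{sec:intro}, that the MUHV problem is a special case of the Sub-ML problem. Concretely, given any instance of MUHV on a graph $G=(V,E)$ with weight function $w(\cdot)$, color set $C$, and partial coloring $c$, we take the ground set to be $V$, the submodular set function to be $f(X) = w(\partial(X))$ as in Eq.~(\ref{eq1}) (whose submodularity is Lemma~\ref{lemma8}), the label set to be $C$, and the pre-labeling to assign label $i$ to the subset $T_i$ of vertices with $c(v) = i$. Any partition $\mathcal{S} = \{S_1,\ldots,S_k\}$ of $V$ has the same value $\sum_i f(S_i)$ under both formulations, so any $\alpha$-approximation for Sub-ML is automatically an $\alpha$-approximation for MUHV.

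Hence, if there were an $(2-\frac{2}{k}-\epsilon)$-approximation algorithm for the Sub-ML problem for some $\epsilon > 0$, applying it to the Sub-ML instances arising from MUHV would yield an $(2-\frac{2}{k}-\epsilon)$-approximation for MUHV, contradicting Theorem~\ref{thm13} and thus refuting the Unique Games Conjecture. The one minor technicality to check is that the value oracle for $f$ required by a Sub-ML algorithm can be implemented in polynomial time on the constructed instances; this is obvious since $f(X) = w(\partial(X))$ is evaluated in $O(|V|+|E|)$ time from the explicit graph $G$. No other work is needed.

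The main obstacle, such as it is, is purely presentational: one must point out that the reduction from MUHV to Sub-ML is a trivial identity embedding rather than a black-box construction, and therefore preserves both the optimum and the approximation ratio exactly. There is no additional approximation loss to track and no nontrivial gadget to verify; the submodularity was already established earlier. Thus the corollary follows in a single line from Theorem~\ref{thm13}.
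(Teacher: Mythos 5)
Your proposal is correct and matches the paper's (implicit) argument: the corollary follows immediately from Theorem~\ref{thm13} because MUHV is, by the identity embedding with $f(X)=w(\partial(X))$, a special case of Sub-ML, so any $(2-\frac{2}{k}-\epsilon)$-approximation for Sub-ML would yield one for MUHV. Your added remark about the polynomial-time value oracle is a harmless and reasonable technicality; no further work is needed.
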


\section{Conclusions}
We studied the maximum happy vertices (MHV) problem and its complement, the minimum unhappy vertices (MUHV) problem.
We first showed that the MHV and MUHV problems are a special case of the supermodular and submodular multi-labeling (Sup-ML and Sub-ML) problems, respectively,
by re-writing the objective functions as set functions.
We next showed that the convex relaxation on the Lov\'{a}sz extension,
presented by Chekuri and Ene for the submodular multi-partitioning (Sub-MP) problem \cite{CE11smp},
can be extended for the Sub-ML problem,
thereby proving that the Sub-ML (Sup-ML, respectively) can be approximated within a factor of $2 - \frac{2}{k}$ ($\frac{2}{k}$, respectively).
These general results imply that the MHV and the MUHV problems can also be approximated within $\frac{2}{k}$ and $2 - \frac{2}{k}$, respectively,
using the same approximation algorithms.

For MHV, this $\frac{2}{k}$-approximation algorithm improves the previous best approximation ratio $\max \{\frac{1}{k}, \frac{1}{\Delta + 1}\}$~\cite{ZL15,ZJL15}, 
where $\Delta$ is the maximum vertex degree of the input graph.
We also showed that the LP relaxation presented by Zhang {\it et al.} \cite{ZJL15} is the same as
the concave relaxation on the Lov\'{a}sz extension for the Sup-ML problem;
we then proved an upper bound of $\frac{2}{k}$ on the integrality gap of the LP relaxation.
These suggest that the $\frac{2}{k}$-approximation algorithm is the best possible based on the LP relaxation;
thus the $\frac{2}{k}$-approximation algorithm is also the best possible based on the concave relaxation on the Lov\'{a}sz extension for the Sup-ML problem.

For MUHV, we formulated a novel LP relaxation and proved that it is the same as the convex relaxation on the Lov\'{a}sz extension for the Sub-ML problem;
we then showed a lower bound of $2 - \frac{2}{k}$ on the integrality gap of the LP relaxation.
Similarly, these suggest that the $(2 - \frac{2}{k})$-approximation algorithm is the best possible based on the LP relaxation;
thus the $(2 - \frac{2}{k})$-approximation algorithm is also the best possible based on the convex relaxation on the Lov\'{a}sz extension for the Sub-ML problem.
Lastly, we proved that this $(2 - \frac{2}{k})$-approximation is optimal for the MUHV problem, assuming the Unique Games Conjecture.
The last hardness result gives another evidence that
it is Unique Games-hard to achieve a $(2 - \frac{2}{k} - \epsilon)$-approximation for the general Sub-ML problem, for any $\epsilon > 0$.


\newpage
\appendix
\section{Properties of the boundary $\partial(\cdot)$}
\begin{lemma}
\label{lemma12}
Given a graph $G = (V, E)$, the boundary $\partial: 2^V \mapsto \mbr$ has the following properties:
i) $\partial(\emptyset) = \emptyset$;
ii) $\partial(X \cap Y) \subseteq \partial(X) \cup \partial(Y)$;
iii) $\partial(X \cup Y) \subseteq \partial(X) \cup \partial(Y)$; and
iv) $\partial(X \cap Y) \cap \partial(X \cup Y) \subseteq \partial(X) \cap \partial(Y)$, for any two subsets $X, Y \subseteq V$.
\end{lemma}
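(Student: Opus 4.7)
The plan is to prove all four statements by direct element-chasing from the definition $\partial(Z) = \{v \in Z : v \text{ has at least one neighbor in } V \setminus Z\}$. Item (i) is immediate, since $\partial(Z) \subseteq Z$ holds in general and so $\partial(\emptyset) \subseteq \emptyset$ forces equality.

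For (ii), I would fix an arbitrary $v \in \partial(X \cap Y)$ and pick a witness neighbor $u \notin X \cap Y$. Since $u \notin X$ or $u \notin Y$, and since $v$ lies in both $X$ and $Y$, the same $u$ certifies membership of $v$ in $\partial(X)$ in the first case or $\partial(Y)$ in the second. Item (iii) is handled by a dual argument: any $v \in \partial(X \cup Y)$ has a witness neighbor $u \notin X \cup Y$, so $u$ lies outside both $X$ and $Y$; since $v$ belongs to at least one of $X$ or $Y$, whichever of those containments holds combines with the same $u$ to yield $v \in \partial(X)$ or $v \in \partial(Y)$.

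Item (iv) is the one that requires choosing the right witness. Given $v$ in both $\partial(X \cap Y)$ and $\partial(X \cup Y)$, the first membership guarantees $v \in X$ and $v \in Y$, while the second provides a neighbor $u \notin X \cup Y$, so simultaneously $u \notin X$ and $u \notin Y$. This single $u$ certifies both $v \in \partial(X)$ and $v \in \partial(Y)$, and therefore $v \in \partial(X) \cap \partial(Y)$.

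The only (mild) obstacle is recognizing in (iv) that the witness neighbor must be drawn from $\partial(X \cup Y)$ rather than from $\partial(X \cap Y)$: the latter only guarantees a neighbor outside one of $X$ or $Y$, which is insufficient to place $v$ in both $\partial(X)$ and $\partial(Y)$ at once. Once this observation is made, the entire lemma is a direct unwrapping of the definition of $\partial$.
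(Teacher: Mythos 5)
Your proof is correct and follows essentially the same element-chasing argument as the paper's own proof, including the key step in (iv) of combining $v \in X \cap Y$ from membership in $\partial(X \cap Y)$ with a witness neighbor $u \notin X \cup Y$ from membership in $\partial(X \cup Y)$. No gaps.
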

\begin{proof}
Recall that for any $X \subseteq V$, $\partial(X)$ is the subset of vertices of $X$ each has at least one neighbor outside of $X$.
It follows that $\partial(\emptyset) = \emptyset$.

Next, for any $v \in \partial(X \cap Y)$, $v \in X \cap Y$ and $v$ has a neighbor $u \notin X \cap Y$.
That is, $u$ is either outside of $X$ or outside of $Y$.
If $u$ is outside of $X$, then $v \in \partial(X)$;
otherwise, $v \in \partial(Y)$.
Therefore, $\partial(X \cap Y) \subseteq \partial(X) \cup \partial(Y)$.

For any $v \in \partial(X \cup Y)$, $v \in X \cup Y$ and $v$ has a neighbor $u \notin X \cup Y$.
If $v \in X$, then $v \in \partial(X)$;
otherwise, $v \in \partial(Y)$.
Therefore, $\partial(X \cup Y) \subseteq \partial(X) \cup \partial(Y)$.

Lastly, from the last paragraph, if $v \in \partial(X \cap Y) \cap \partial(X \cup Y)$,
then $v \in X \cap Y$ and $v$ has a neighbor $u \notin X \cup Y$.
These imply that $v \in \partial(X)$ and $v \in \partial(Y)$, i.e., $v \in \partial(X) \cap \partial(Y)$.
Therefore, $\partial(X \cap Y) \cap \partial(X \cup Y) \subseteq \partial(X) \cap \partial(Y)$.
\end{proof}

\section{Submodularity of the set function $f(\cdot)$ defined in Eq.~(\ref{eq1})}
\begin{proof}
Given a graph $G = (V, E)$, we want to prove that for any two subsets $X, Y \subseteq V$, $f(X) + f(Y) \ge f(X \cap Y) + f(X \cup Y)$,
where $f(X) := w(\partial(X))$.

Recall that the boundary $\partial: 2^V \mapsto \mbr$ satisfies
ii) $\partial(X \cap Y) \subseteq \partial(X) \cup \partial(Y)$ and iii) $\partial(X \cup Y) \subseteq \partial(X) \cup \partial(Y)$.
Therefore, $\partial(X \cap Y) \cup \partial(X \cup Y) \subseteq \partial(X) \cup \partial(Y)$ also holds.
Furthermore, the boundary $\partial: 2^V \mapsto \mbr$ also satisfies iv) $\partial(X \cap Y) \cap \partial(X \cup Y) \subseteq \partial(X) \cap \partial(Y)$.
We thus conclude that
\[
w(\partial(X \cap Y) \cup \partial(X \cup Y)) + w(\partial(X \cap Y) \cap \partial(X \cup Y)) \le w(\partial(X) \cup \partial(Y)) + w(\partial(X) \cap \partial(Y)),
\]
which is exactly
\[
f(X \cap Y) + f(X \cup Y)) \le f(X) + f(Y).
\]
This proves the submodularity (Lemma~\ref{lemma8}, and the supermodularity in Lemma~\ref{lemma3}).
\end{proof}

\section{An instance showing that Sub-ML does not reduce to Sub-MP}
The following instance of the MUHV problem shows that,
given a graph $G = (V, E)$ and some pre-colored vertices,
contracting the vertices pre-colored the same into a single vertex will change the objective function, 
resulting in an instance with a completely different optimum.

We set a constant $W > \epsilon > 0$.

In this instance $I$, the input graph $G = (V, E)$ has $9$ vertices,
each vertex $v_i$ has a non-negative weight $w(v_i)$,
the color set is $C = \{1, 2, 3\}$,
and the partial coloring function $c$ pre-colors $2$ vertices with each color.
In more details,
\[
\begin{array}{l}
V = \{v_1, v_2, \ldots, v_9\};\\
E = \{(v_i, v_{i+1}), (v_i, v_{i+2}), (v_{i+1}, v_{i+2}) | i = 1, 4, 7\} \cup \{(v_2, v_6), (v_5, v_9), (v_8, v_3)\};\\
w(v_1) = w(v_4) = w(v_7) = W > 0,\\
w(v_i) = \epsilon < W, \ i = 2, 3, 5, 6, 8, 9;\\
c(v_1) = c(v_2) = 1,\\
c(v_4) = c(v_5) = 2,\\
c(v_7) = c(v_8) = 3.
\end{array}
\]
Since $v_3, v_6, v_9$ must be unhappy and $W > \epsilon$, an optimal solution is to color $v_3$ with $1$, $v_6$ with $2$ and $v_9$ with $3$.
Then the minimum total weight of unhappy vertices is $6\epsilon$.
Observe that in this optimal solution, $v_1$ is happy but $v_2$ is not.

By contracting all the vertices pre-colored with the same color into a single vertex,
that is, contracting $v_1, v_2$ into $v_{12}$, contracting $v_4, v_5$ into $v_{45}$, and contracting $v_7, v_8$ into $v_{78}$,
we obtain an instance $I' = (G' = (V', E'), w', C, c')$ as follows:
\[
\begin{array}{l}
V' = \{v_{12}, v_3, v_{45}, v_6, v_{78}, v_9\};\\
E' = \{(v_{12}, v_3), (v_{45}, v_6), (v_{78}, v_9)\} \cup \{(v_{12}, v_6), (v_{45}, v_9), (v_{78}, v_3)\};\\
w'(v_{12}) = w'(v_{45}) = w'(v_{78}) = W + \epsilon,\\
w'(v_3) = w'(v_6) = w'(v_9) = \epsilon;\\
c'(v_{12}) = 1,\\
c'(v_{45}) = 2,\\
c'(v_{78}) = 3.
\end{array}
\]
Note that $G'$ is a simple circle.
Since $v_3, v_6, v_9$ must still be unhappy, at most one of $v_{12}$, $v_{45}$ and $v_{78}$ can become happy.
Thus the minimum total weight of unhappy vertices here is $2W + 5\epsilon > 6\epsilon$.
This optimal coloring scheme for $I'$ is certainly not optimal for $I$.

We remark that such a contracting procedure fails for one reason that the vertices pre-colored the same do not have the same neighbors.
For example, $v_1$ has a neighbor $v_3$ other than $v_2$ while $v_2$ has neighbors $v_3$ and $v_6$ other than $v_1$.
\end{document}